\documentclass[10pt]{article}
\usepackage[utf8]{inputenc}
\usepackage{amsthm}
\usepackage{amsmath}
\usepackage{amssymb}
\usepackage{tikz}
\usepackage{algorithm}
\usepackage{algpseudocode}
\usepackage{booktabs}
\usepackage{mathtools}
\usepackage{threeparttable}
\usepackage{array,tabularx,ragged2e}

\usetikzlibrary{arrows.meta,calc,decorations.markings,decorations.pathmorphing}
\newcommand{\niceArrow}{-{Stealth[length=2.25mm]}}
\theoremstyle{plain}

\pgfdeclarelayer{background}
\pgfdeclarelayer{foreground}
\pgfsetlayers{background,main,foreground}
\tikzset{
    my box/.style = {
        , line cap = round
        , line join = round
    }
  }
\newcommand{\highlight}[3]{
  \path [my box, line width = #1, draw = #2, transparency group, opacity=1] #3;
}
\usepackage[margin=1cm]{caption}
\usepackage{subcaption}
\usepackage[textsize=small]{todonotes}

\newtheorem{proposition}{Proposition}

\newtheorem{lemma}{Lemma}

\theoremstyle{definition}
\newtheorem{definition}{Definition}
\newtheorem{example}{Example}
\newtheorem{observation}{Observation}
\newtheorem{note}{Note}

\numberwithin{proposition}{subsection}
\numberwithin{corollary}{subsection}
\numberwithin{claim}{subsection}
\numberwithin{lemma}{subsection}
\numberwithin{theorem}{subsection}
\numberwithin{definition}{subsection}
\numberwithin{example}{subsection}
\numberwithin{observation}{subsection}
\numberwithin{note}{subsection}
\numberwithin{remark}{subsection}

\usepackage[a4paper, total={6.75in, 9.9in}]{geometry}

\usepackage{enumitem}
\setlist[itemize]{leftmargin=*}

\newcommand{\Input}{\item[\textbf{Input:}]}
\newcommand{\Output}{\item[\textbf{Output:}]}



\usepackage{xparse}

\newcounter{subroutine}
\renewcommand{\thesubroutine}{\arabic{subroutine}}

\newlength{\subrindent}
\newcommand{\callsc}[1]{\textsc{#1}}
\NewDocumentEnvironment{spec}{m m m m m m}
{%
  \refstepcounter{subroutine}
  \par\medskip
  \noindent
  \textbf{Subroutine~\thesubroutine. }\callsc{#1(#2)}
  \label{#3}
  \vspace{0.4\baselineskip}
  \begin{list}{}{\leftmargin=\subrindent \rightmargin=0pt
    \labelwidth=0pt \labelsep=0pt \itemindent=-0.4em
    \topsep=0pt
    \itemsep=4pt \parsep=0pt}
    \item \textbf{Input.}~#4
    \item \textbf{Output.}~#5
    \item \textbf{Implementation.}~#6
}
{%
  \end{list}
  \par\medskip
}

\tikzset{paralleledge/.style={to path={
      \pgfextra{%
        \pgfmathsetmacro{\startf}{-(#1-1)/2}
        \pgfmathsetmacro{\endf}{-\startf}
        \pgfmathsetmacro{\stepf}{\startf+1}}
      \ifnum 1=#1 -- (\tikztotarget)  \else
        let \p{mid}=($(\tikztostart)!0.5!(\tikztotarget)$)
        in
        \foreach \i in {\startf,\stepf,...,\endf}
        {%
          (\tikztostart) .. controls ($ (\p{mid})!\i*6pt!90:(\tikztotarget) $) .. (\tikztotarget)
        }
      \fi
      \tikztonodes
    }
  }
}

\tikzset{circleAroundEdges/.style n args={3}{
    decorate, decoration = {markings, mark=at position #3 with {\draw[double=black,white,double distance=.6pt] (0,#2) arc [x radius = #1, y radius = #2, start angle = 90, end angle = -125];} , mark=at position #3 with{\draw[double=black,white,double distance=.6pt] (0,-#2) arc [x radius = #1, y radius = #2, start angle = 270, end angle = 360+125];}}
  }
}

\date{}

\usepackage{titling}
\usepackage[hidelinks]{hyperref}
\thanksmarkseries{arabic}

\title{Asymptotically Faster Algorithms for\\Recognizing $(k,\ell)$-Sparse Graphs}

\author{
  Bence De\'ak\thanks{Department of Operations Research, ELTE E\"otv\"os Lor\'and University, P\'azm\'any P.\ s.\ 1/c, Budapest H-1117, Hungary. E-mail: \texttt{deakbence2002@gmail.com}}
  \and P\'eter Madarasi\thanks{HUN-REN Alfr\'ed R\'enyi Institute of Mathematics, Re\'altanoda u.\ 13--15, Budapest H-1053, Hungary; and Department of Operations Research, ELTE E\"otv\"os Lor\'and University, P\'azm\'any P.\ s.\ 1/c, Budapest H-1117, Hungary. E-mail: \texttt{madarasi@renyi.hu}, corresponding author}
}

\begin{document}

\maketitle

\vspace{-6mm}
\begin{abstract}
  The family of $(k,\ell)$-sparse graphs, introduced by Lorea, plays a central role in combinatorial optimization and has a wide range of applications, particularly in rigidity theory.
  A key algorithmic problem is to decide whether a given graph is $(k,\ell)$-sparse and, if not, to produce a vertex set certifying the failure of sparsity.
  While pebble game algorithms have long yielded $O(n^2)$-time recognition throughout the classical range $0 \leq \ell < 2k$, and $O(n^3)$-time algorithms in the extended range $2k \leq \ell < 3k$, substantially faster bounds were previously known only in a few special cases.

  We present new recognition algorithms for the parameter ranges $0 \leq \ell \leq k$, $k < \ell < 2k$, and $2k \leq \ell < 3k$.
  Our approach combines bounded-indegree orientations, reductions to rooted arc-connectivity, augmenting-path techniques, and a divide-and-conquer method based on centroid decomposition.
  This yields the first subquadratic, and in fact near-linear-time, recognition algorithms throughout the classical range when instantiated with the fastest currently available subroutines.
  Under purely combinatorial implementations, the running times become $O(n\sqrt n)$ for $0 \leq \ell \leq k$ and $O(n\sqrt{n\log n})$ for $k< \ell <2k$.
  For $2k \leq \ell < 3k$, we obtain an $O(n^2)$-time algorithm when $\ell \leq 2k+2$ and an $O(n^2\log n)$-time algorithm otherwise.
  In each case, the algorithm can also return an explicit violating set certifying that the input graph is not $(k,\ell)$-sparse.

  \medskip\noindent\textbf{Keywords:} $(k,\ell)$-sparse graphs; sparsity matroids; pebble game algorithms; bounded-indegree orientations; rooted arc-connectivity; centroid decomposition; rigidity theory. 
\end{abstract}

\bigskip

\section{Introduction}

Throughout this paper, we fix nonnegative integers $k$ and $\ell$ with $\ell<3k$.
In the \emph{classical range} $\ell < 2k$, a graph $G = (V, E)$ is called \emph{$(k, \ell)$-sparse} if, for each subset $X \subseteq V$, the number $i_G(X)$ of edges induced by $X$ satisfies $i_G(X) \leq \max\{k|X| - \ell, 0\}$.
In the extended range $2k \leq \ell < 3k$, we require this inequality only for subsets of size at least three.  
If $G$ is $(k, \ell)$-sparse and has exactly $\max\{k|V| - \ell, 0\}$ edges, then it is called \emph{$(k, \ell)$-tight}. 
A graph is \emph{$(k, \ell)$-spanning} if it contains a $(k, \ell)$-tight subgraph spanning all vertices. 
A \emph{$(k, \ell)$-block} of a $(k, \ell)$-sparse graph is a subset $X \subseteq V$ that induces a $(k, \ell)$-tight subgraph, and a \emph{$(k, \ell)$-component} is an inclusion-wise maximal $(k, \ell)$-block. 
These notions form the basis of a rich combinatorial theory closely tied to matroids and rigidity, and they naturally give rise to the optimization and recognition problems studied in this paper.

\smallskip
The concept of $(k,\ell)$-sparsity originates from Lorea~\cite{lorea} and has since surfaced in several foundational results. 
In particular, the case $\ell=k$ characterizes exactly those graphs that decompose into $k$ forests~\cite{nash1961edgeDisjointST,tutte}. More generally, for $\ell\leq k$, a graph is $(k,\ell)$-sparse precisely when its edge set can be partitioned into $\ell$ forests and $k-\ell$ pseudoforests (i.e., graphs with at most one cycle in each connected component)~\cite{gabow1988forests}.
For $(k,\ell)=(2,3)$, the $(k,\ell)$-tight graphs are the \emph{Laman graphs}, describing generically minimally rigid bar-joint frameworks in the plane~\cite{laman}, while $(2,3)$-spanning graphs correspond to rigid frameworks~\cite{jordan2016combinatorial}.
In the classical range, the family of $(k,\ell)$-sparse edge sets forms the independent sets of a matroid, allowing efficient optimization over $(k, \ell)$-sparse subgraphs~\cite{lorea,lee2008pebble}.
In the extended range $2k \leq \ell < 3k$, three-dimensional rigidity provides further motivation: the boundary case $\ell=2k$ plays a key role in the analysis of block--hole graphs with a single block and yields necessary conditions for 3D rigidity when $k=3$~\cite{jordan2023RigidBlockAndHoleGraphs}. 
Altogether, $(k,\ell)$\nobreakdash-sparsity offers a unifying framework linking matroid theory, edge-disjoint spanning structures, and planar as well as spatial rigidity.

\smallskip
In this paper, we consider the \emph{recognition problem}, which asks whether a given graph is $(k,\ell)$-sparse. 
In the classical range, this is precisely the independence oracle for the matroid of $(k,\ell)$-sparse edge sets and thus serves as a crucial building block for numerous combinatorial algorithms, especially in rigidity theory. Throughout the paper, we treat $k$ and $\ell$ as fixed constants and therefore omit $\mathrm{poly}(k, \ell)$ factors from asymptotic bounds. We denote by $n$ and $m$ the numbers of vertices and edges of any input graph, respectively. In the recognition problem, we restrict attention throughout to instances with $m \leq kn$, as any graph with $m>kn$ trivially violates the sparsity condition.

\subsection{Historical algorithms}

For recognizing $(k, \ell)$-sparse graphs, the classical approach is the \emph{pebble game} framework introduced in~\cite{bergPhD,berg2003algorithms,hendrickson,lee2008pebble,pebbleDS}.
The pebble game is a greedy algorithm exploiting the matroidal structure of $(k,\ell)$-sparse graphs: it processes the edges one by one and inserts an edge whenever the resulting subgraph remains $(k,\ell)$-sparse. 
Testing whether an edge can be added is based on the Orientation Lemma~\cite{SLHOrientationLemma}, by maintaining a $k$-indegree-bounded orientation and searching for augmenting paths. In its straightforward implementation, the running time of the pebble game algorithm is $O(n^2)$.
Although several faster algorithms were developed for special choices of $k$ and $\ell$, which we review below, it remained open whether recognition throughout the entire classical range could be achieved in subquadratic time.

\paragraph{The lower range.} For $\ell \leq k$, a graph is $(k,\ell)$-sparse if and only if its edge set can be decomposed into $\ell$ forests and $k-\ell$ pseudoforests~\cite{gabow1988forests}. This yields a recognition algorithm due to Gabow and Westermann~\cite{gabow1988forests} that maintains such a decomposition for the accepted edges. The algorithm first applies a preprocessing step that deletes vertices of degree at most $k$, storing the incident edges so that they can be restored in a postprocessing step. It then performs an initial augmentation phase using the subroutine \emph{batch}. After these batch steps, the remaining edges are processed one at a time: for each such edge, the algorithm first tries to insert it directly into one of the maintained parts. If this is not possible, it searches for an augmenting sequence of exchanges among the parts. The crucial point is that this search is implemented using specialized scan procedures rather than a generic matroid-union routine: the algorithm uses \emph{breadth-first scanning} for the pseudoforest parts and \emph{cyclic scanning} for the forest parts. If these scans find an augmentation, then the corresponding exchanges are performed and the edge is accepted; otherwise, the edge is rejected. This algorithm decides independence in the union of $k-\ell$ bicircular and $\ell$ graphic matroids, thereby yielding a test for $(k,\ell)$-sparsity. The same framework can also be extended to recognize $(k,k+1)$-sparse graphs --- in particular, $(2,3)$-sparse graphs --- by first finding a maximum-size $(k,k)$-sparse subgraph decomposed into $k$ forests. If this subgraph does not contain all input edges, then the graph is not $(k,k+1)$-sparse. Otherwise, one checks whether the subgraph contains a nonempty \emph{top clump}, i.e., an inclusion-wise maximal \emph{clump}. A top clump can be found efficiently from the computed decomposition. In this case, nonempty clumps correspond to edge sets of $(k,k)$-tight subgraphs, so a $(k,k)$-sparse graph is $(k,k+1)$-sparse if and only if no nonempty top clump exists; otherwise, the vertex set of the corresponding $(k,k)$-tight subgraph violates $(k,k+1)$-sparsity.

\paragraph{Laman graphs.} For Laman graphs, a more efficient alternative to naive recognition methods is based on the fact that a graph is Laman if and only if, after duplicating any edge, the resulting multigraph can be partitioned into two edge-disjoint spanning trees. This decomposition provides the red and black trees on which the notion of a \emph{red-black hierarchy} is built, giving an equivalent characterization: a graph is Laman if and only if it admits such a hierarchy. In the algorithm introduced in~\cite{bereg2005certifying}, one first duplicates an edge, computes the two edge-disjoint spanning trees, performs a decomposition according to a specific set of rules, and then constructs the graph corresponding to that decomposition in $O(n^2)$ time. A key observation is that the graph characterizing the decomposition is a red-black hierarchy if and only if all edges are removed during the decomposition process~\cite{daescu2009towards}; therefore, instead of explicitly constructing the decomposition graph, it suffices to test whether all edges were deleted. Given the two edge-disjoint spanning trees, the improved algorithm carries out this decomposition in $O(n\log n)$ time using tree linearization via DFS numbering together with segment trees to report and delete crossing~edges.

\paragraph{Outside the classical range.} In the range $2k \leq \ell < 3k$, the standard approach is a straightforward extension of the pebble game algorithm. 
When deciding whether an edge $uv$ can be inserted, one has to check, for every vertex $w \notin \{u,v\}$, whether there exists a $(k,\ell)$-tight subgraph containing all three vertices $u,v,w$. 
Performing this test separately for each choice of $w$ leads to an $O(n^3)$-time algorithm~\cite{madarasi2023klSparse}. 
In the special case $\ell = 2k$, this verification can be carried out for all vertices $w$ simultaneously using a constant number of graph searches per input edge, reducing the overall running time to $O(n^2)$~\cite{madarasi2023klSparse}.

\subsection{Our contribution}

We design new algorithms for all three ranges $\ell \leq k$, $k<\ell<2k$, and $2k \leq \ell < 3k$. In the range $k < \ell < 2k$, we assume that the input graph is loopless, as any loop would trivially violate the sparsity condition. For $2k \leq \ell < 3k$, we follow the standard convention and assume that the input graph is simple. A central ingredient of our approach is a novel reduction of a key subproblem of recognition to rooted arc-connectivity in directed graphs, which can be solved very efficiently for the relevant parameters. We combine this with flow-based and augmenting-path techniques and, in the most technically challenging range $k<\ell<2k$, with a divide-and-conquer method based on centroid decomposition. In each considered parameter range, we obtain significant asymptotic improvements over the best previous bounds. A notable special case is $\ell = k+1$ --- and in particular the fundamental case $k=2$, $\ell=3$ from combinatorial rigidity theory --- for which Gabow and Westermann gave an $O(n\sqrt{n\log n})$ algorithm~\cite{gabow1988forests}. As noted in~\cite{daescu2009towards}, it is not clear whether this bound can be improved solely by using a faster forest decomposition procedure. The main contribution of~\cite{daescu2009towards} is precisely to decouple these aspects in the Laman graph case: their algorithm recognizes Laman graphs by using a forest decomposition algorithm as a black box in an initial step, thereby separating the decomposition from the subsequent computations. However, this approach is specific to Laman graphs: it does not extend to general $(2,3)$-sparse graphs, and it cannot produce a certificate of failure in the form of a vertex set violating the sparsity condition. In contrast, our algorithm works throughout the entire upper range $k < \ell < 2k$, retains the decoupling of the forest decomposition from the later stages of the computation, and can also return a violating set when the input graph is not $(k, \ell)$-sparse.

\paragraph{Summary of running times.}
Table~\ref{tab:complexities} summarizes the best asymptotic running-time bounds achieved for the recognition problem in each parameter range. More detailed bounds --- given in terms of the running-time notation introduced later in Section~\ref{sec:alg_primitives} --- are presented in Table~\ref{tab:recognition_complexities}.

\begin{table}[H]
\centering
\caption{Asymptotic running-time bounds for the main problems.}
\label{tab:complexities}
\setlength{\tabcolsep}{8pt}
\renewcommand{\arraystretch}{1.28}
\begin{threeparttable}
\begin{tabularx}{\textwidth}{@{} >{\raggedright\arraybackslash}m{.25\textwidth}
                                    >{\raggedright\arraybackslash}m{.435\textwidth}
                                    >{\raggedright\arraybackslash}X @{}}
\toprule
Range & Old bounds & New bounds \\
\midrule\addlinespace[6pt]

$0 \leq \ell \leq k$
& $\begin{cases}
O(n\sqrt{n}) & \text{ if } \ell = 0~\cite{gabow1988forests} \\
O(n\sqrt{n \log n}) & \text{ if } \ell > 0~\cite{gabow1988forests} \\
O(n^{1 + o(1)})\tnote{*} & \text{ if } \ell = k~\cite{arkhipov2026faster}
\end{cases}$
& \vspace{13pt}$O(n^{1+o(1)})$\tnote{†}
\\

\addlinespace[6pt]\midrule\addlinespace[6pt]

$k < \ell < 2k$
& $\begin{cases}
O(n\sqrt{n \log n})\tnote{‡} & \text{ if } \ell = k + 1~\cite{gabow1988forests} \\
O(n^{2}) & \text{ if } \ell > k + 1~\cite{lee2008pebble}
\end{cases}$
& \vspace{6pt}$O(n^{1+o(1)})$\tnote{†}
\\

\addlinespace[6pt]\midrule\addlinespace[6pt]

$2k \leq \ell < 3k$
& $\begin{cases}
O(n^2) & \text{if } \ell = 2k~\cite{madarasi2023klSparse} \\
O(n^3) & \text{if } \ell > 2k~\cite{madarasi2023klSparse}
\end{cases}$
& $\begin{cases}
O(n^2) & \text{if } \ell \leq 2k + 2 \\
O(n^2 \log n) & \text{if } \ell > 2k + 2
\end{cases}$
\\

\addlinespace[2pt]
\bottomrule
\end{tabularx}

\begin{tablenotes}
\setlength{\itemsep}{4pt}
\item[*] The best-known purely combinatorial algorithm has worst-case running time $O(n \sqrt{n \log n})$.
\item[†] Under purely combinatorial implementations of our primitives, the new bounds become $O(n\sqrt{n})$ for the $\ell\leq k$ recognition case and $O(n\sqrt{n\log n})$ for the $k<\ell<2k$ recognition case.
\item[‡] In the special case of Laman graphs, i.e., for $(k,\ell)=(2,3)$ with $m=2n-3$, an $O(T_{\mathrm{st}}(n)+n\log n)$-time recognition algorithm is known, where $T_{\mathrm{st}}(n)$ is the time required to extract two edge-disjoint spanning trees~\cite{daescu2009towards}. Combined with the near-linear forest-packing algorithm given in~\cite{arkhipov2026faster}, this yields an $O(n^{1+o(1)})$ bound. For the planar Laman case, a more specialized algorithm was proposed in~\cite{rollin2019recognizing} with a time complexity of $O(n \log^3 n)$. However, neither algorithm provides a violating vertex set certifying that the input graph is not $(2, 3)$-sparse.
\end{tablenotes}

\end{threeparttable}
\end{table}

\section{Structural and algorithmic preliminaries}

\subsection{Running-time primitives and complexity overview}
\label{sec:alg_primitives}

The recognition algorithms developed in the following subsections rely on several standard graph-theoretic subroutines, which we treat as black boxes. We first specify the input and output behavior of each subroutine and then introduce notation for its running time, summarized in Table~\ref{tab:parameters}.

\begin{spec}{MaxFlow}{$D=(V,A),s,t,g$}{spec:maxflow}
{A digraph $D=(V,A)$ with $n$ vertices and $O(n)$ arcs, a source $s$, a sink $t$, and a capacity function $g:A\to\mathbb{Z}_{\geq 0}$ such that the sum of capacities is $O(n)$.}
{A pair $(x,S)$, where $x:A\to\mathbb{Z}_{\geq 0}$ is a maximum $g$-feasible flow and $s \in S \subseteq V \setminus \{t\}$ is a minimum $s$-$t$ cut.}
{The best-known algorithm for the maximum flow problem~\cite{brand2023deterministic} is based on an interior-point method and has near-linear time complexity, i.e., it is asymptotically faster than $O(n^{1+\varepsilon})$ for any $\varepsilon > 0$. A purely combinatorial alternative is Dinic's algorithm~\cite{dinic1970algorithm}, whose running time is $O(n \sqrt{n})$ in the special case when the number of arcs and the sum of capacities are both $O(n)$~\cite{even1975network}.}
\end{spec}

\begin{spec}{BoundedOri$_{\kappa}$}{$G=(V,E)$}{spec:boundedori}
{A graph $G=(V,E)$ with $n$ vertices and $O(n)$ edges.}
{A pair $(X, D)$, where $X=\varnothing$ and $D$ is a $\kappa$-indegree-bounded orientation if such an orientation exists; otherwise, $X$ is a vertex set that violates the $(\kappa, 0)$-sparsity of $G$, and $D = \varnothing$.}
{We can find a $\kappa$-indegree-bounded orientation in near-linear time by reducing it to a maximum-flow problem as shown below.}
\end{spec}

For the sake of completeness, we present a simple reduction from the $\kappa$-indegree-bounded orientation problem to a maximum-flow problem in the following lemma.

\begin{lemma}
\label{lemma:k_orientation}
Given a graph $G=(V,E)$ with $n$ vertices and $O(n)$ edges, the problem of finding a $\kappa$-indegree-bounded orientation --- or certifying its nonexistence by producing a violating vertex set --- can be reduced in linear time to a maximum-flow instance with $O(n)$ vertices, $O(n)$ arcs, and total capacity $O(n)$.
\end{lemma}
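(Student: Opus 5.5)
We reduce to the standard bipartite assignment network. Build a digraph $D$ with a source $s$, a sink $t$, one node $a_e$ for each edge $e\in E$, and one node $b_v$ for each vertex $v\in V$. Add an arc $s\to a_e$ of capacity $1$ for every $e$. For each edge $e=uv$, add arcs $a_e\to b_u$ and $a_e\to b_v$ of capacity $1$; for a loop, add a single arc $a_e\to b_v$. Finally, add an arc $b_v\to t$ of capacity $\kappa$ for each $v$. The network has $|E|+|V|+2=O(n)$ nodes and $O(|E|+|V|)=O(n)$ arcs, and its total capacity is at most $3|E|+\kappa|V|=O(n)$, since $\kappa$ is a constant. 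It is built in linear time.

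Next we show that $G$ has a $\kappa$-indegree-bounded orientation if and only if the maximum flow value equals $|E|$. By integrality, an integral maximum flow of value $|E|$ sends one unit from each $a_e$ to exactly one endpoint $b_w$. Orient $e$ toward $w$. The capacity $\kappa$ on $b_v\to t$ then says exactly that the indegree of $v$ is at most $\kappa$. The converse is immediate: any such orientation defines a flow of value $|E|$. So we call \textsc{MaxFlow}, read off the orientation from $x$ in linear time when the value is $|E|$, and output $(\varnothing,D)$.

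The step needing care is extracting a violating set from the minimum cut $S$ when the value is less than $|E|$. Let $X=\{v: b_v\in S\}$ and $F=\{e: a_e\in S\}$. The cut capacity is
\[
|E\setminus F| + \sum_{e\in F}|\{\text{endpoints } w \text{ of } e : b_w\notin S\}| + \kappa|X| .
\]
Since $S$ is a minimum cut, every $e\in F$ has all its endpoints in $X$. Otherwise, moving $a_e$ out of $S$ would trade a capacity-$\ge 1$ arc for the single arc $s\to a_e$ of capacity $1$, which does not increase the cut. To make this rigorous, we first normalize $S$: replace $F$ by $F'=\{e: \text{both endpoints in } X\}=E(X)$. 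This does not increase the capacity. The resulting cut has capacity $|E|-i_G(X)+\kappa|X|<|E|$, that is, $i_G(X)>\kappa|X|=\max\{\kappa|X|-0,0\}$. So $X$ violates $(\kappa,0)$-sparsity, and $X\neq\varnothing$ automatically.

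The normalization is a linear-time pass over $E$. Alternatively, we can take $X$ to be the set of $b_v$ reachable from $s$ in the residual graph and argue directly, but the normalization argument above is enough, and it is the only nontrivial part of the proof.
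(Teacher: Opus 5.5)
Your proof is correct and follows the paper's: you use the same edge--vertex assignment network, read the orientation off an integral flow of value $|E|$ in the same way, and take the same violating set $X=S\cap V$, except that the paper lower-bounds the cut capacity edge by edge by $|E|-i_G(X)+\kappa|X|$ rather than normalizing $S$, which amounts to the same thing. One wording fix: minimality of $S$ does not force every $e\in F$ to have both endpoints in $X$, since an edge with exactly one endpoint outside $X$ gives a tie; that sentence should say ``without loss of generality,'' which is exactly what your normalization step supplies.
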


\begin{proof}
Let $m=|E|$.
If $m>\kappa n$, then $i_G(V)=m>\kappa|V|$, so $V$ violates $(\kappa,0)$-sparsity and can be returned immediately.
Hence, we may assume that $m\leq \kappa n$.

We construct a flow network with source $s$, sink $t$, the vertices of $V$, and one additional vertex $z_e$ for each edge $e\in E$.
For every edge $e=uv$, add the arcs $sz_e$, $z_eu$, and $z_ev$, each with capacity $1$.
For every vertex $v\in V$, add the arc $vt$ with capacity $\kappa$.
Let $(x,S)$ be the output of \textsc{MaxFlow} on this network.

Suppose first that $x$ has value $m$.
Then every arc $sz_e$ is saturated.
Since $x$ is integral, flow conservation at $z_e$ implies that exactly one of the arcs $z_eu$ and $z_ev$ carries one unit of flow.
Orient $e$ toward the corresponding endpoint.
For every vertex $v$, the flow on $vt$ equals the number of edges oriented toward $v$ and is therefore at most $\kappa$.
Thus, the resulting orientation is $\kappa$-indegree-bounded.
Conversely, given a $\kappa$-indegree-bounded orientation of $G$, send one unit of flow along $sz_e$, $z_ev$, and $vt$ for every edge $e$ oriented toward $v$.
The capacity of $vt$ is respected because at most $\kappa$ edges are oriented toward $v$, so this defines a flow of value $m$.

Suppose now that $x$ has value less than $m$, and set $X=S\cap V$.
Consider an edge $e=uv$ that is not induced by $X$.
If $z_e\notin S$, then the arc $sz_e$ crosses the cut.
If $z_e\in S$, then at least one endpoint of $e$ lies outside $X$, so at least one of the arcs $z_eu$ and $z_ev$ crosses the cut.
Therefore, the cut contains at least one capacity-$1$ arc for each of the $m-i_G(X)$ edges not induced by $X$.
Moreover, the arc $vt$ crosses the cut for every $v\in X$, contributing $\kappa|X|$ to its capacity.
Hence, the capacity of the cut is at least $m-i_G(X)+\kappa|X|$.
By the max-flow min-cut theorem, this capacity equals the value of $x$ and is less than $m$.
It follows that $i_G(X)>\kappa|X|$, so $X$ violates the $(\kappa,0)$-sparsity of~$G$.

The network has $n+m+2$ vertices, $3m+n$ arcs, and total capacity $3m+\kappa n$.
Since $\kappa$ is fixed and $m\leq \kappa n$, all three quantities are $O(n)$.
The network can be constructed in linear time, and either the orientation or the violating set can be recovered from the output in linear time.
\end{proof}

\begin{spec}{ForestDecomp$_{\kappa}$}{$G=(V,E)$}{spec:forestdecomp}
{A graph $G=(V, E)$ with $n$ vertices and $O(n)$ edges.}
{A tuple $(X, F_1, \dots, F_{\kappa})$, where $X=\varnothing$ and $(F_1, \dots, F_{\kappa})$ is a partition of $E$ into disjoint forests if such a partition exists; otherwise, $X$ is a vertex set that violates the $(\kappa, \kappa)$-sparsity of $G$, and $F_1, \dots, F_{\kappa} =~\varnothing$.}
{For computing a forest decomposition, the fastest algorithm currently known~\cite{arkhipov2026faster} uses the nearly linear maximum-flow subroutine mentioned above. As a purely combinatorial alternative, one may use the Gabow--Westermann algorithm~\cite{gabow1988forests}, which runs in time $O(n\sqrt{n\log n})$.}
\end{spec}

\begin{note}
It is not immediate that the Gabow--Westermann algorithm --- or any other algorithm that finds $\kappa$ disjoint forests of maximum total size --- can produce a violating vertex set when run on a graph that is not $(\kappa,\kappa)$-sparse. Let $F_1, \dots, F_{\kappa}$ denote the disjoint forests found by such an algorithm, and let $H = (V, F_1 \cup \dots \cup F_{\kappa})$. It is easy to obtain a $\kappa$-indegree-bounded orientation of $H$: for each component of each forest $F_i$, choose a root and orient every edge from parent to child. With such an orientation in hand, we can then easily construct a violating set. Take any edge in $E \setminus E(H)$ and attempt to insert it into $H$ by running one iteration of the pebble game algorithm~\cite{lee2008pebble}. Since the insertion is impossible, this yields a certificate, namely a vertex set $X$ that violates the $(\kappa,\kappa)$-sparsity of $G$.
\end{note}

\begin{spec}{RootedArcConn$_{\eta}$}{$D=(V, A)$, $s$}{spec:rootedarcconn}
{A digraph $D=(V, A)$ with $n$ vertices and $O(n)$ arcs, and a root $s \in V$.}
{A nonempty set $X \subseteq V \setminus \{s\}$ such that $\varrho_D(X) < \eta$, if such a set exists; otherwise $\varnothing$.}
{We can check the rooted $\eta$-arc-connectivity of $D$ as follows:
\begin{itemize}[itemindent=12pt,labelsep=4pt]
\item When $\eta=0$, there is nothing to check.
\item When $\eta=1$, perform a single graph search in linear time to obtain the set $S$ of vertices that are reachable from the root. If $S=V$, then the graph is rooted $1$-arc-connected; otherwise, $V \setminus S$ is a violating~set.
\item When $\eta=2$, first run the $\eta=1$ test.
If all vertices are reachable, subdivide every arc and compute a dominator tree~\cite{alstrup1999dominators}; an arc $a=uv$ is a bridge exactly when its subdivision vertex dominates $v$.
Deleting a bridge yields a violating set $X$ of unreachable vertices with $\varrho_D(X)=1$, while the absence of a bridge certifies rooted $2$-arc-connectivity.
This takes $O(n+m)=O(n)$ time.
\item When $\eta \geq 3$, add a new vertex $z$, $\eta$ parallel arcs in each direction between $s$ and $z$, and $\eta$ parallel arcs from each $v\in V\setminus\{s\}$ to $s$.
The augmented digraph has $O(n)$ arcs, and its cuts of value below $\eta$ correspond exactly to the nonempty sets $X\subseteq V\setminus\{s\}$ with $\varrho_D(X)<\eta$.
For $\eta=3$, compute the $3$-edge-connected components in linear time~\cite{georgiadis2024three}.
If all vertices form one component, no such cut exists; otherwise, two unit-capacity flow computations, one in each direction between vertices in distinct components and each truncated at value $3$, recover such a cut in linear time.
For $\eta \geq 4$, Gabow's algorithm finds such a cut, if one exists, in $O(n\log n)$ time~\cite{gabow1991matroid}, since the cut $\{z\}$ has value $\eta$.
\end{itemize}}
\end{spec}

Table~\ref{tab:parameters} summarizes the running times of our subroutines.
For the sake of precision, we fix positive constants $C_1,\ldots,C_4$ large enough such that every instance of ``size'' $n$ considered henceforth --- including the maximum-flow network of Lemma~\ref{lemma:k_orientation} --- meets the corresponding size bounds in Table~\ref{tab:parameters}. The stated running-time bounds reflect the best currently known algorithms.

\begin{table}[H]
\centering
\caption{Notation and running-time bounds for standard subroutines.}
\setlength{\tabcolsep}{7pt}
\renewcommand{\arraystretch}{1.28}
\begin{threeparttable}

\begin{tabularx}{\textwidth}{@{}
  >{\raggedright\arraybackslash}m{.09\textwidth}
  >{\raggedright\arraybackslash}m{.34\textwidth}
  >{\raggedright\arraybackslash}m{.19\textwidth}
  >{\RaggedRight\arraybackslash}X
@{}}
\toprule
Notation & Subroutine & Constraints & Best bound \\
\midrule\addlinespace[6pt]

$T_{MF}(n)$
& \textsc{MaxFlow}($D=(V,A), s, t, g$)
& $|V|, |A|, g(A) \leq C_1 n$
& $O(n^{1+o(1)})$\tnote{*}~\cite{brand2023deterministic}
\\

\addlinespace[6pt]\midrule\addlinespace[6pt]

$T_{BO}(n, \kappa)$
& \textsc{BoundedOri$_{\kappa}$}($G=(V,E)$)
& $|V|, |E| \leq C_2 n$
& $O(T_{MF}(n)) \subseteq O(n^{1+o(1)})$\tnote{*}
\\

\addlinespace[6pt]\midrule\addlinespace[6pt]
$T_{FD}(n, \kappa)$
& \textsc{ForestDecomp$_{\kappa}$}($G=(V,E)$)
& $|V|, |E| \leq C_3 n$
& $O(n^{1+o(1)})$\tnote{*}~\cite{arkhipov2026faster}
\\

\addlinespace[6pt]\midrule\addlinespace[6pt]

\parbox[c]{\linewidth}{$T_{RC}(n,\eta)$}
& \parbox[c]{\linewidth}{\textsc{RootedArcConn}$_{\eta}$($D=(V,A), s$)}
& \parbox[c]{\linewidth}{$|V|, |A| \leq C_4 n$}
& \parbox[c]{\linewidth}{$\begin{cases}
  O(n) & \text{ if } \eta \leq 3~\cite{alstrup1999dominators,georgiadis2024three} \\
  O(n \log n) & \text{ if } \eta > 3~\cite{gabow1991matroid}
\end{cases}$
\vspace{2pt}}
\\

\bottomrule
\end{tabularx}

\begin{tablenotes}
\item[*] If we restrict ourselves to purely combinatorial algorithms, then the best bounds for the first and second rows change to $O(n\sqrt{n})$~\cite{dinic1970algorithm,even1975network}, and the bound for the third row changes to $O(n \sqrt{n \log n})$~\cite{gabow1988forests}.
\end{tablenotes}

\end{threeparttable}

\label{tab:parameters}
\end{table}

\begin{note}
In the constructions used in this paper, the choices $C_1 = 4k, C_2 = k, C_3=k$, and $C_4=2k$ are adequate as rough estimates. Note, however, that the asymptotic bounds above do not depend on the particular values of the constants $C_i$. Accordingly, in our later constructions, we only show that any quantity we bound by $C_i n$ --- such as the number of vertices, the number of edges, or the total capacity --- is $O(n)$.
\end{note}

\begin{note}
In the worst case, all of our subroutines above must read their entire input. The only exception is \textsc{RootedArcConn}$_{\eta}$ for $\eta = 0$, where the output is always $\varnothing$, regardless of the input. Nevertheless, for simplicity, we impose the same assumption in this case as well and therefore assume $T_{RC}(n,0) = \Theta(n)$.
\end{note}

\paragraph{A convenient regularization.}
For technical convenience, we also define
\[
T'_{RC}(n,\eta)=n\cdot \max \left\{\frac{T_{RC}(q,\eta)}{q}: q \in \{1, \dots, n\}\right\}.
\]
By construction, $T_{RC}(n,\eta) \leq T'_{RC}(n,\eta)$ holds for all $n$; hence, $T_{RC}(n,\eta)=O(T'_{RC}(n,\eta))$. Moreover, $T'_{RC}$ satisfies the same asymptotic bounds as $T_{RC}$ listed in Table~\ref{tab:parameters}.
\paragraph{Summary of running times.}
Table~\ref{tab:recognition_complexities} summarizes the asymptotic running times of our new algorithms, developed in Section~\ref{sec:recognition_algorithms}, for the three parameter ranges.
The bounds are expressed using the running-time primitives introduced above.
For completeness, the table also shows the asymptotic bounds obtained by instantiating these primitives with the fastest known implementations.

\begin{table}[H]
\centering
\caption{Asymptotic running-time bounds for the recognition problem.}
\label{tab:recognition_complexities}
\setlength{\tabcolsep}{8pt}
\renewcommand{\arraystretch}{1.28}
\begin{threeparttable}

\begin{tabularx}{\textwidth}{@{}
  >{\raggedright\arraybackslash}m{.4\textwidth}
  >{\raggedright\arraybackslash}X
@{}}
\toprule
Range & New bounds \\
\addlinespace[6pt]\midrule\addlinespace[6pt]

\parbox[c]{\linewidth}{$0 \leq \ell \leq k$}
& \parbox[c]{\linewidth}{$O(T_{BO}(n,k) + T_{RC}(n,\ell)) \nobreak\subseteq {} O(n^{1+o(1)})$\tnote{*}}
\\

\addlinespace[6pt]\midrule\addlinespace[6pt]

\parbox[c]{\linewidth}{$k < \ell < 2k$}
& \parbox[c]{\linewidth}{$O(T_{FD}(n, k) + T'_{RC}(n,\ell-k)\log n) \nobreak\subseteq {} O(n^{1+o(1)})$\tnote{*}}
\\

\addlinespace[6pt]\midrule\addlinespace[6pt]

\parbox[c]{\linewidth}{$2k \leq \ell < 3k$}
& \parbox[c]{\linewidth}{$O\bigl(n\cdot T_{RC}(n,\ell+1-2k)\bigr) \nobreak\subseteq {}
\begin{cases}
O(n^2) & \text{ if } \ell \leq 2k+2\\
O(n^2\log n) & \text{ if } \ell > 2k+2
\end{cases}$ \vspace{2pt}}
\\

\bottomrule
\end{tabularx}

\begin{tablenotes}
\setlength{\itemsep}{4pt}
\item[*] Under purely combinatorial implementations of our primitives, the new bounds become $O(n\sqrt{n})$ for the $\ell\leq k$ case and $O(n\sqrt{n\log n})$ for the $k<\ell<2k$ case.
\end{tablenotes}

\end{threeparttable}
\end{table}

\subsection{Structural preliminaries}

In this subsection, we establish the structural properties and algorithmic ingredients common to all parameter ranges. A key lemma shows that an essential subtask of the recognition problem reduces to testing rooted $\eta$-arc-connectivity. Since for small fixed $\eta$, this can be performed very efficiently (see Table~\ref{tab:parameters}), the subtask itself also admits a fast solution. We begin with a simple definition.

\begin{definition}
Given a digraph $D=(V, A)$ and an independent vertex set $U_0 \subseteq V$, we say that $D$ is $U_0$-source if $\varrho_D(v) = 0$ for all $v \in U_0$.
\end{definition}

\begin{lemma}
\label{lemma:u0_source}
Let $t$ be a nonnegative integer, let $G = (V, E)$ be a graph, and let $U_0 \subseteq V$ be an independent set of size~$t$. Then $G$ has a $k$-indegree-bounded $U_0$-source orientation if and only if every strict superset $X \subseteq V$ of $U_0$ satisfies the $(k, tk)$-sparsity bound in~$G$.
\end{lemma}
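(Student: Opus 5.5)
The plan is to reduce the statement to Hakimi's orientation theorem with a nonuniform indegree bound. Define $m\colon V\to\mathbb{Z}_{\geq 0}$ by $m(v)=0$ for $v\in U_0$ and $m(v)=k$ for $v\in V\setminus U_0$. A $k$-indegree-bounded $U_0$-source orientation is then exactly an orientation with $\varrho(v)\le m(v)$ for every $v$. First note that for a strict superset $X$ of $U_0$ we have $|X|>t$, so $k|X|-tk\ge k>0$. Hence the $(k,tk)$-sparsity bound for such $X$ reads simply
\[
i_G(X)\le k|X\setminus U_0| = m(X).
\]

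\emph{Necessity.} This direction is immediate. Fix such an orientation and a strict superset $X\supsetneq U_0$. Every edge induced by $X$ has its head in $X$. Since vertices of $U_0$ have indegree $0$, every such head lies in $X\setminus U_0$. Each vertex there receives at most $k$ arcs, so $i_G(X)\le k|X\setminus U_0|$.

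\emph{Sufficiency.} I would reprove the generalized orientation criterion with the same flow network as in Lemma~\ref{lemma:k_orientation}. The only change is the capacity of the arc $vt$, which becomes $m(v)$: it is $0$ on $U_0$ and $k$ elsewhere. Integral flows of value $|E|$ correspond to orientations with $\varrho(v)\le m(v)$, exactly as before. If the maximum flow value is below $|E|$, the cut argument of that lemma applies verbatim to $Y=S\cap V$. The cut contains a unit arc for each of the $|E|-i_G(Y)$ edges not induced by $Y$, plus the arcs $vt$ for $v\in Y$. Its capacity is therefore at least $|E|-i_G(Y)+m(Y)$, which forces $i_G(Y)>m(Y)$. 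So it suffices to show that $i_G(Y)\le m(Y)$ holds for every $Y\subseteq V$.

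To verify this, take any $Y\subseteq V$ and split into two cases.
\begin{itemize}
\item If $Y\subseteq U_0$, then $i_G(Y)=0=m(Y)$, because $U_0$ is independent.
\item Otherwise, $X=Y\cup U_0$ is a strict superset of $U_0$. Since $i_G$ is monotone, the hypothesis gives
\[
i_G(Y)\le i_G(X)\le k|X\setminus U_0| = k|Y\setminus U_0| = m(Y).
\]
\end{itemize}

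The only delicate points are these. The case $Y\subseteq U_0$ is exactly where the independence of $U_0$ is used. Passing from $Y$ to $Y\cup U_0$ is harmless because vertices of $U_0$ contribute nothing to $m$. I expect no genuine obstacle beyond making sure the flow-based criterion is stated for nonuniform bounds, including zero capacities.
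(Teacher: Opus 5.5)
Your proof is correct and matches the paper's argument. You use the same capacity function ($0$ on $U_0$, $k$ elsewhere), the same necessity count, and the same two-case check (independence of $U_0$ when $Y\subseteq U_0$, otherwise monotonicity via $Y\cup U_0$); the only difference is that you rederive the nonuniform Orientation Lemma from the flow network of Lemma~\ref{lemma:k_orientation} instead of citing it, which is valid because that cut argument never uses uniform capacities.
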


\begin{proof}
The condition is clearly necessary, as for a $k$-indegree-bounded $U_0$-source orientation $D_0$ and any vertex set $X$ strictly containing $U_0$, we have
\[
  i_G(X) \leq \sum_{v \in X} \varrho_{D_0}(v) = \sum_{v \in X \setminus U_0} \varrho_{D_0}(v) \leq k|X \setminus U_0| = k|X| - tk.
\]
To prove sufficiency, we will apply the Orientation Lemma~\cite{SLHOrientationLemma}, which states that for a given upper-capacity function $g: V \to \mathbb{Z}_{\geq 0}$, there exists a $g$-indegree-bounded orientation of $G$ if and only if
\[
  i_G(X) \leq \sum_{v \in X} g(v) \eqqcolon g(X)
\]
holds for every vertex set $X \subseteq V$.
Define $g$ by setting $g(v)=0$ for $v\in U_0$ and $g(v)=k$ for $v\notin U_0$. Fix any $X$ and let $r = |X \setminus U_0|$. If $r=0$, then $i_G(X) = 0 = g(X)$ since $U_0$ is independent. Otherwise, by the sparsity condition,
\[
i_G(X) \leq i_G(X \cup U_0) \leq (t + r)k - tk
= rk = k \cdot |X \setminus U_0| + 0 \cdot |X \cap U_0|
= g(X).
\]
Hence the desired orientation $D_0$ exists.
\end{proof}

\begin{lemma}
\label{lemma:main}
Let $t$ be a nonnegative integer such that $tk \leq \ell \leq (t+1)k$. Let $G=(V,E)$ be a graph, and let $U_0 \subseteq V$ be an independent set of size $t$. Suppose that $G$ has a $k$-indegree-bounded $U_0$-source orientation $D_0$. Let $D_0'$ be the digraph obtained from $D_0$ by adding a new root vertex $s$, deleting the vertices in $U_0$, and, for each $v \in V \setminus U_0$, adding $k - \varrho_{D_0}(v)$ parallel arcs from $s$ to $v$. Then, for any strict superset $X \subseteq V$ of $U_0$, the following are equivalent:
\begin{enumerate}[label=(\roman*)]
    \item $X$ violates the $(k,\ell)$-sparsity bound of $G$, i.e., $i_G(X) > k|X| - \ell$.
    \item $X \setminus U_0$ violates the rooted $(\ell - tk)$-arc-connectivity of $D_0'$, i.e., $\varrho_{D_0'}(X \setminus U_0) < \ell - tk$.
\end{enumerate}
\end{lemma}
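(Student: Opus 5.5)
The plan is to compute $\varrho_{D_0'}(X\setminus U_0)$ exactly in terms of $i_G(X)$ and $|X|$. Once we have the identity
\[
\varrho_{D_0'}(X\setminus U_0) = k|X| - tk - i_G(X),
\]
the equivalence of (i) and (ii) is immediate. Indeed, $i_G(X) > k|X|-\ell$ holds if and only if $k|X|-tk-i_G(X) < \ell - tk$.

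To derive the identity, write $Y = X\setminus U_0$, which is nonempty because $X$ strictly contains $U_0$. We split the arcs of $D_0'$ entering $Y$ into two kinds.

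\emph{Arcs from the root $s$.} These contribute $\sum_{v\in Y}(k-\varrho_{D_0}(v)) = k|Y| - \sum_{v\in Y}\varrho_{D_0}(v)$.

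\emph{Arcs of $D_0$ with tail in $V\setminus(U_0\cup Y)$ and head in $Y$.} These are exactly the arcs of $D_0$ entering $X$ from $V\setminus X$. Arcs from $U_0$ into $Y$ would come from the deleted vertices; they lie inside $X$ and are not counted. The number of such arcs is $\varrho_{D_0}(X) = \sum_{v\in X}\varrho_{D_0}(v) - i_G(X)$, since every edge induced by $X$ has its head in $X$.

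Because $D_0$ is $U_0$-source, $\sum_{v\in X}\varrho_{D_0}(v) = \sum_{v\in Y}\varrho_{D_0}(v)$. Adding the two contributions, the indegree sums cancel, and we get
\[
\varrho_{D_0'}(Y) = k|Y| - i_G(X) = k|X| - tk - i_G(X),
\]
using $|Y| = |X| - t$.

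The main point requiring care is the bookkeeping in the second kind of arc. Arcs of $D_0$ from $U_0$ into $Y$ are internal to $X$, so they are included in $i_G(X)$ and must not appear in the entering count. Deleting $U_0$ in $D_0'$ handles this consistently, because those arcs disappear from $D_0'$. All edges of $G$ are oriented in $D_0$, so no edge is missed.

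The hypothesis $tk\le \ell\le (t+1)k$ is not needed for the equivalence itself; it only ensures that the connectivity parameter $\ell - tk$ lies in $[0,k]$, which matters for the algorithmic use of the lemma. The construction of $D_0'$ is well defined because $k-\varrho_{D_0}(v)\ge 0$ by $k$-indegree-boundedness.
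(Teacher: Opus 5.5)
Your proof is correct and follows essentially the paper's route: derive the exact identity $\varrho_{D_0'}(X\setminus U_0)=k|X|-tk-i_G(X)$ by counting, then read off the equivalence. The only difference is in the bookkeeping. The paper works inside $D_0'$, using the number $q$ of edges between $U_0$ and $X\setminus U_0$, which cancels. You instead pass to $\varrho_{D_0}(X)$ in $D_0$. Your identification of the non-root entering arcs with the arcs entering $X$ silently uses the $U_0$-source property (no arc of $D_0$ enters $U_0$), which is worth stating explicitly.
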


\begin{proof}
Fix a vertex set $X \subseteq V$ such that $U_0 \subsetneq X$.
Let $q$ be the number of edges of $G$ with one endpoint in $U_0$ and the other in $X \setminus U_0$.
Then
\[
i_{D_0'}(X \setminus U_0) = i_G(X) - q,
\quad\text{and}\quad
\sum_{v \in X \setminus U_0} \varrho_{D_0'}(v) = k|X \setminus U_0| - q.
\]
Hence, we have
\[
\varrho_{D_0'}(X \setminus U_0) = \sum_{v \in X \setminus U_0} \varrho_{D_0'}(v) - i_{D_0'}(X \setminus U_0)
= k|X \setminus U_0| - i_G(X)
= k|X| - tk - i_G(X).
\]
Therefore,
\[
\varrho_{D_0'}(X \setminus U_0) < \ell - tk
\;\Longleftrightarrow\;
k|X| - tk - i_G(X) < \ell - tk
\;\Longleftrightarrow\;
i_G(X) > k|X| - \ell.
\qedhere
\]
\end{proof}

\begin{example}
Let $(k,\ell)=(2,3)$ and consider the graph $G$ on the left of Figure~\ref{fig:lemma_main} with $U_0=\{u\}$. The shaded four-vertex set $X$ induces $6>5=2\cdot4-3$ edges, violating the $(2,3)$-sparsity bound. The middle subfigure shows a $2$-indegree-bounded $U_0$-source orientation of $G$. From this, we construct the digraph $D_0'$ with root $s$ as prescribed.
As seen in the right subfigure, $X\setminus U_0$ violates the rooted $1$-arc-connectivity of $D_0'$.

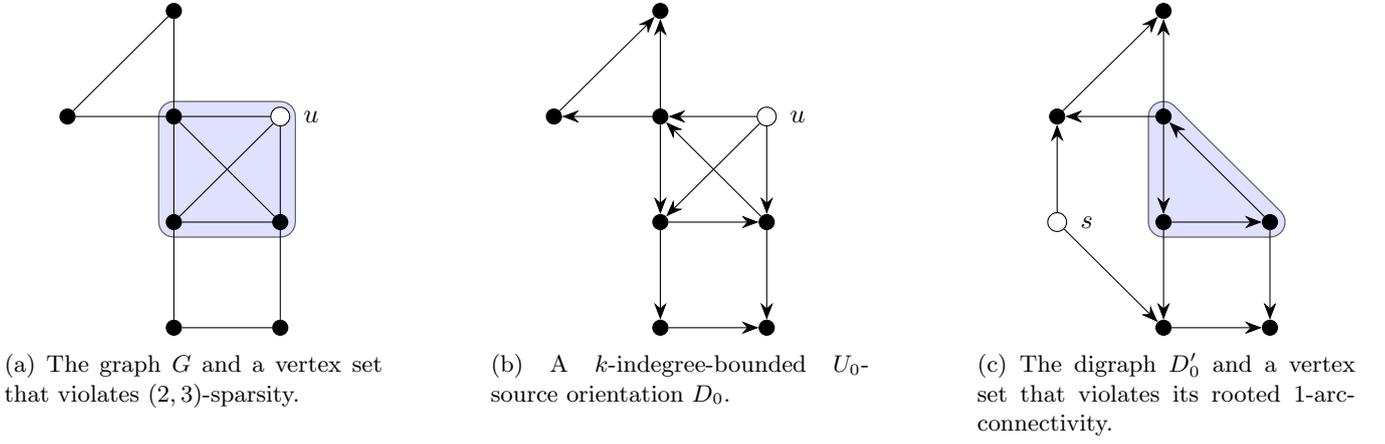
\begin{figure}[H]
\centering
  \begin{subfigure}[t]{0.28\textwidth}
    \centering
    \begin{tikzpicture}[scale=1.4]
      \node[draw, circle, fill=black, inner sep=2pt] (A) at (0,0) {};
      \node[draw, circle, fill=black, inner sep=2pt] (B) at (1,0) {};
      \node[draw, circle, fill=white, inner sep=2pt, label={[xshift=0.5mm]right:$u$}] (C) at (1,1) {};
      \node[draw, circle, fill=black, inner sep=2pt] (D) at (0,1) {};
      \node[draw, circle, fill=black, inner sep=2pt] (E) at (0,2) {};
      \node[draw, circle, fill=black, inner sep=2pt] (F) at (-1,1) {};
      \node[draw, circle, fill=black, inner sep=2pt] (G) at (0,-1) {};
      \node[draw, circle, fill=black, inner sep=2pt] (H) at (1,-1) {};

      \begin{pgfonlayer}{background}
        \begin{scope}[opacity=.6, transparency group]
          \path[line cap=round, line join=round, line width=0.5pt, double=blue!20, double distance=3.8mm, draw=black]{(A.center) -- (B.center) -- (C.center) -- (D.center) -- cycle};
          \highlight{3.7mm}{blue!20, fill=blue!20}{(A.center) to (B.center) to (C.center) to (D.center) -- cycle}
        \end{scope}
      \end{pgfonlayer}
      
      \draw (A) -- (B);
      \draw (A) -- (C);
      \draw (A) -- (D);
      \draw (B) -- (C);
      \draw (B) -- (D);
      \draw (C) -- (D);
      \draw (A) -- (G);
      \draw (G) -- (H);
      \draw (B) -- (H);
      \draw (D) -- (E);
      \draw (D) -- (F);
      \draw (E) -- (F);
    \end{tikzpicture}
    \caption{The graph $G$ and a vertex set that violates $(2, 3)$-sparsity.}
  \end{subfigure}
  \hfill
  \begin{subfigure}[t]{0.28\textwidth}
    \centering
    \begin{tikzpicture}[scale=1.4]
      \node[draw, circle, fill=black, inner sep=2pt] (A) at (0,0) {};
      \node[draw, circle, fill=black, inner sep=2pt] (B) at (1,0) {};
      \node[draw, circle, fill=white, inner sep=2pt, label={[xshift=0.5mm]right:$u$}] (C) at (1,1) {};
      \node[draw, circle, fill=black, inner sep=2pt] (D) at (0,1) {};
      \node[draw, circle, fill=black, inner sep=2pt] (E) at (0,2) {};
      \node[draw, circle, fill=black, inner sep=2pt] (F) at (-1,1) {};
      \node[draw, circle, fill=black, inner sep=2pt] (G) at (0,-1) {};
      \node[draw, circle, fill=black, inner sep=2pt] (H) at (1,-1) {};

      \draw[\niceArrow] (A) to (B);
      \draw[\niceArrow] (C) to (A);
      \draw[\niceArrow] (D) to (A);
      \draw[\niceArrow] (C) to (B);
      \draw[\niceArrow] (B) to (D);
      \draw[\niceArrow] (C) to (D);
      \draw[\niceArrow] (A) to (G);
      \draw[\niceArrow] (G) to (H);
      \draw[\niceArrow] (B) to (H);
      \draw[\niceArrow] (D) to (E);
      \draw[\niceArrow] (D) to (F);
      \draw[\niceArrow] (F) to (E);
    \end{tikzpicture}
    \caption{A $k$-indegree-bounded $U_0$-source orientation $D_0$.}
  \end{subfigure}
  \hfill
  \begin{subfigure}[t]{0.28\textwidth}
    \centering
    \begin{tikzpicture}[scale=1.4]
      \node[draw, circle, fill=black, inner sep=2pt] (A) at (0,0) {};
      \node[draw, circle, fill=black, inner sep=2pt] (B) at (1,0) {};
      \node[draw, circle, fill=black, inner sep=2pt] (D) at (0,1) {};
      \node[draw, circle, fill=black, inner sep=2pt] (E) at (0,2) {};
      \node[draw, circle, fill=black, inner sep=2pt] (F) at (-1,1) {};
      \node[draw, circle, fill=black, inner sep=2pt] (G) at (0,-1) {};
      \node[draw, circle, fill=black, inner sep=2pt] (H) at (1,-1) {};
      \node[draw, circle, fill=white, inner sep=2pt, label={[xshift=0.5mm]right:$s$}] (S) at (-1,0) {};

      \begin{pgfonlayer}{background}
        \begin{scope}[opacity=.6, transparency group]
          \path[line cap=round, line join=round, line width=0.5pt, double=blue!20, double distance=3.8mm, draw=black]{(A.center) -- (B.center) -- (D.center) -- cycle};
          \highlight{3.7mm}{blue!20, fill=blue!20}{(A.center) to (B.center) to (D.center) -- cycle}
        \end{scope}
      \end{pgfonlayer}

      \draw[\niceArrow] (A) to (B);
      \draw[\niceArrow] (D) to (A);
      \draw[\niceArrow] (B) to (D);
      \draw[\niceArrow] (A) to (G);
      \draw[\niceArrow] (G) to (H);
      \draw[\niceArrow] (B) to (H);
      \draw[\niceArrow] (D) to (E);
      \draw[\niceArrow] (D) to (F);
      \draw[\niceArrow] (F) to (E);
      \draw[\niceArrow] (S) to (F);
      \draw[\niceArrow] (S) to (G);
    \end{tikzpicture}
    \caption{The digraph $D_0'$ and a vertex set that violates its rooted $1$-arc-connectivity.}
  \end{subfigure}
  \caption{Application of Lemma~\ref{lemma:main}.}\label{fig:lemma_main}
\end{figure}

\end{example}

Let $U_0$ be an independent set of $G$, and suppose we already have a $k$-indegree-bounded $U_0$-source orientation. By Lemma~\ref{lemma:main}, the question of whether there exists a strict superset of $U_0$ that violates the $(k, \ell)$-sparsity of $G$ reduces to a rooted arc-connectivity problem. This yields the following algorithm:

\begin{algorithm}[H]
    \caption{Check Superset Sparsity} \label{alg:check_superset_sparsity}
    \begin{algorithmic}[1]
        \Input A $k$-indegree-bounded digraph $D_0=(V, A)$, and an independent vertex set $U_0 \subseteq V$ with $t$ elements.
        
    	\Require The inequalities $tk \leq \ell \leq (t+1)k$ are satisfied and $D_0$ is $U_0$-source.
        \Output A vertex set strictly containing $U_0$ that violates the $(k, \ell)$-sparsity of the underlying graph of $D_0$ if it exists; otherwise $\varnothing$.
        \vspace{0.2em}
        \hrule
        \vspace{0.2em}
        \Procedure{CheckSupsetSparsity$_{k,\ell}$}{$D_0=(V, A), U_0$}
        	\State $D_0' \gets (D_0 \setminus U_0) \cup \{s\}$ \Comment{Initialize the digraph $D_0'$ of Lemma~\ref{lemma:main} with root $s$}
        	\For{$v \in V \setminus U_0$}
        		\For{$i \gets 1, \dots, k - \varrho_{D_0}(v)$}
        			\State $D_0' \gets D_0' \cup \{sv\}$ \Comment{Insert an arc from $s$ to $v$ in $D_0'$}
        		\EndFor
        	\EndFor
        	\State $X \gets$ \Call{RootedArcConn$_{\ell-tk}$}{$D_0', s$} \Comment{Reduction to rooted arc-connectivity}
        	\If{$X = \varnothing$}
        		\State \Return $\varnothing$
        	\Else
        		\State \Return $X \cup U_0$
        	\EndIf
        \EndProcedure
    \end{algorithmic}
\end{algorithm}

\begin{proposition}\label{proposition:main}
Algorithm~\ref{alg:check_superset_sparsity} runs in $O(T_{RC}(n, \ell - tk))$ time.
\end{proposition}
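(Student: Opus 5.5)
The plan is to bound separately the cost of building $D_0'$ and the single call to \textsc{RootedArcConn}$_{\ell-tk}$. We then check that the instance passed to the subroutine meets the size constraint $|V|,|A|\leq C_4 n$ of Table~\ref{tab:parameters}, so that its cost is $T_{RC}(n,\ell-tk)$ up to a constant factor.

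\textbf{Construction of $D_0'$.} Since $D_0$ is $k$-indegree-bounded, it has at most $k n$ arcs, and we use the standing assumption $m\leq kn$. Deleting the vertices of $U_0$ together with their incident arcs and adding the root $s$ can be done in $O(n+|A|)=O(n)$ time, provided the indegrees $\varrho_{D_0}(v)$ are available. If they are not given, one pass over $A$ computes all of them in $O(n)$ time. The double loop then inserts $\sum_{v\in V\setminus U_0}(k-\varrho_{D_0}(v))\leq kn$ arcs, each in constant time. So the construction costs $O(n)$.

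\textbf{Size of the subroutine input.} The digraph $D_0'$ has $|V|-t+1\leq n+1$ vertices. It has at most $|A|+kn$ arcs; more precisely, every $v\in V\setminus U_0$ has indegree exactly $k$ in $D_0'$, so the arc count is at most $kn$. Both quantities are $O(n)$, consistent with the choice $C_4=2k$ noted earlier. Hence the call costs $O(T_{RC}(n,\ell-tk))$. The parameter $\eta=\ell-tk$ lies in $\{0,\dots,k\}$ by the requirement $tk\leq\ell\leq(t+1)k$, so it is a fixed constant and the subroutine is well defined. Forming $X\cup U_0$ afterwards takes $O(n)$ time.

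\textbf{Combining.} The total time is $O(n)+O(T_{RC}(n,\ell-tk))$. The only delicate point is absorbing the $O(n)$ term. This follows from the convention, stated in the earlier note, that $T_{RC}(n,\eta)=\Omega(n)$ for every $\eta$, including $\eta=0$, because each subroutine must read its input. That gives the claimed bound $O(T_{RC}(n,\ell-tk))$. Correctness of the output is not part of this proposition; it follows from Lemma~\ref{lemma:main}.
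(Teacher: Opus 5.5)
Your proof is correct and follows the same route as the paper: build $D_0'$ in linear time, observe that it has $O(n)$ vertices and arcs so the single \textsc{RootedArcConn} call costs $O(T_{RC}(n,\ell-tk))$, and absorb the $O(n)$ overhead via the convention $T_{RC}(n,\eta)=\Omega(n)$, which you state explicitly where the paper leaves it implicit. One small slip: a vertex of $V\setminus U_0$ need not have indegree exactly $k$ in $D_0'$, because deleting $U_0$ removes the arcs coming from $U_0$, so its indegree is $k$ minus the number of such arcs; ``at most $k$'' still gives your bound of $kn$ arcs.
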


\begin{proof}
We construct $D_0'$ in linear time. Since $D_0'$ has $O(n)$ vertices and $O(n)$ arcs, it takes $T_{RC}(n, \ell - tk)$ time to decide whether it is rooted $(\ell - tk)$-arc-connected. Therefore, the total running time of the algorithm~is
\[
  O(T_{RC}(n, \ell - tk) + n) = O(T_{RC}(n, \ell - tk)).\qedhere
\]
\end{proof}

To apply Lemma~\ref{lemma:main} and Algorithm~\ref{alg:check_superset_sparsity}, we require a $k$-indegree-bounded $U_0$-source orientation. We show that such an orientation --- or a certificate of its nonexistence --- can be obtained from an arbitrary $k$-indegree-bounded orientation using augmenting paths. Indeed, suppose that $D$ is a $k$-indegree-bounded orientation that is not $U_0$-source, and that there is no directed path in $D$ from
\[
S=\{v \in V \setminus U_0 : \varrho_D(v) < k\}
\]
to $U_0$. Let $T$ be the set of vertices from which at least one vertex of $U_0$ is reachable in $D$. Then $\varrho_D(T)=0$, and $\varrho_D(v)=k$ for every $v \in T \setminus U_0$. Hence,
\[
i_G(T) = \sum_{v \in T} \varrho_D(v) > \sum_{v \in T \setminus U_0} \varrho_D(v) = k|T \setminus U_0| = k|T| - tk.
\]
Clearly, $T \setminus U_0 \neq \emptyset$; therefore, by Lemma~\ref{lemma:u0_source}, no $k$-indegree-bounded $U_0$-source orientation exists.

This yields an augmenting-path algorithm that, starting from any $k$-indegree-bounded orientation $D$, finds either a suitable orientation $D_0$ or a certificate of its nonexistence. In each step, we find a path from $S$ to $U_0$ and reverse its arcs, thereby decreasing the sum of indegrees in $U_0$. If there is no path from $S$ to $U_0$, then either the current orientation is already suitable or no such orientation exists. We give the exact implementation below.

\begin{algorithm}[H]
    \caption{Reorient} \label{alg:transform_orientation}
    \begin{algorithmic}[1]
        \Input A $k$-indegree-bounded digraph $D=(V, A)$, and an independent vertex set $U_0 \subseteq V$ with $t$ elements.
        \Output A pair $(T, D_0)$, where $T=\varnothing$ and $D_0$ is a $k$-indegree-bounded $U_0$-source reorientation of $D$ if it exists; otherwise, $T \supsetneq U_0$ is a vertex set that violates the $(k, tk)$-sparsity of the underlying graph of $D$, and $D_0 = \varnothing$.
        \vspace{0.2em}
        \hrule
        \vspace{0.2em}
        \Procedure{Reorient$_{k}$}{$D=(V,A), U_0$}
            \While{$D$ is not $U_0$-source}
            	\State $P \gets$ a path in $D$ from $\{v \in V \setminus U_0: \varrho_D(v) < k\}$ to $U_0$
      	  		\If{no such path exists}
      	  			\State $T \gets$ set of vertices from which $U_0$ is reachable
      	    		\State \Return $(T, \varnothing)$ \Comment{No suitable orientation exists}
      	  		\EndIf
      	  		\State reverse the arcs of $P$ in $D$ \Comment{Decrease the sum of indegrees in $U_0$}
            \EndWhile
            \State \Return $(\varnothing, D)$
        \EndProcedure
    \end{algorithmic}
\end{algorithm}

\begin{proposition}\label{proposition:d0_orientation}
Algorithm~\ref{alg:transform_orientation} runs in $O(tn)$ time.
\end{proposition}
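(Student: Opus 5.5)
The bound follows from two facts: the while loop runs at most $kt$ times, and each iteration costs $O(n)$.

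\textbf{Number of iterations.} Let $\Phi=\sum_{u\in U_0}\varrho_D(u)$. Since $D$ is $k$-indegree-bounded, initially $\Phi\le kt$. Take a path $P$ from $v\in S$ to $u\in U_0$. We may truncate $P$ at its first vertex of $U_0$, so its internal vertices lie outside $U_0$, and we may take it to be simple (any BFS path is). Reversing $P$ has the following effects:
\begin{itemize}
\item the indegree of $u$ drops by one;
\item the indegree of $v$ rises by one, which is still at most $k$ because $\varrho_D(v)<k$;
\item internal vertices keep their indegree.
\end{itemize}
Hence the orientation stays $k$-indegree-bounded and $\Phi$ strictly decreases. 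The loop stops as soon as $\Phi=0$, that is, once $D$ is $U_0$-source, so it runs at most $kt=O(t)$ times.

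\textbf{Cost of each iteration.} Since $m\le kn$, the digraph has $O(n)$ arcs, and we maintain an indegree array in linear total time. Each iteration does the following:
\begin{itemize}
\item test the loop condition by checking the counters of $U_0$;
\item run one BFS from the super-source set $S$, built by scanning all vertices, stopping at the first vertex of $U_0$ reached;
\item recover $P$ from parent pointers and reverse its $O(n)$ arcs, updating the two affected indegrees.
\end{itemize}
All of this is $O(n+m)=O(n)$.

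\textbf{Termination without a path.} When no path exists, we compute $T$ by one reverse search from $U_0$, in $O(n)$ time. This happens at most once.

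Adding up gives $O(kt\cdot n + n)=O(tn)$ for $t\ge1$, treating $k$ as a constant. The case $t=0$ is trivial: $D$ is vacuously $U_0$-source, and the single check is $O(1)$, or $O(n)$ if we count reading the input.

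The main point of care is confirming that a reversal never pushes an indegree above $k$ and strictly lowers $\Phi$. This relies on choosing a simple path that ends at its first $U_0$ vertex. Otherwise an internal $U_0$ vertex would be harmless but could obscure the argument, so truncating explicitly settles it.
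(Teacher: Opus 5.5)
Your proof is correct and takes the same approach as the paper: the potential $\sum_{u\in U_0}\varrho_D(u)\le tk$ drops by one with each path reversal, and each iteration, as well as the one-time computation of $T$, costs a single $O(n)$ traversal. You also justify two points the paper only asserts: that a reversal keeps the orientation $k$-indegree-bounded and lowers the potential by exactly one, and that the degenerate case $t=0$ is handled.
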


\begin{proof}
Since the original digraph $D$ is $k$-indegree-bounded and $|U_0|=t$, we have $\sum_{v\in U_0}\varrho_D(v) \leq |U_0| \cdot k=tk$. Reversing an augmenting path decreases the sum of the indegrees in $U_0$ by $1$, so the algorithm makes at most $tk$ iterations. Since each augmenting path, as well as the vertex set $T$, can be found in $O(n)$ time by a single graph traversal, the overall running time is $O(tn)$.
\end{proof}

\begin{note}
Throughout this paper, the number $t$ is treated as a constant whenever the procedure is invoked; this way, the time complexity of Algorithm~\ref{alg:transform_orientation} becomes $O(n)$.
\end{note}

\section{Recognition algorithms}
\label{sec:recognition_algorithms}

\subsection{The range $\ell \leq k$}

The case $\ell \leq k$ is most commonly handled in the literature~\cite{lee2008pebble} using the Gabow--Westermann algorithm~\cite{gabow1988forests}, which has worst-case running time $O(n\sqrt{n\log n})$. By Lemma~\ref{lemma:main} and Proposition~\ref{proposition:main}, we obtain a more efficient solution, as described below.

\begin{algorithm}[H]
    \caption{Check Sparsity for $\ell \leq k$} \label{alg:small_l}
    \begin{algorithmic}[1]
        \Input An undirected graph $G = (V, E)$.
        \Output A vertex set that violates the $(k, \ell)$-sparsity of $G$ if it exists; otherwise $\varnothing$.
        \vspace{0.2em}
        \hrule
        \vspace{0.2em}
        \Procedure{CheckSparsity$_{k,\ell}$}{$G=(V,E)$}
            \State $(X, D) \gets$ \Call{BoundedOri$_{k}$}{$G$}
            \If{$X \neq \varnothing$} \Comment{$X$ violates $(k, 0)$-sparsity}
                \State \Return $X$
            \EndIf
            \State \Return \Call{CheckSupsetSparsity$_{k,\ell}$}{$D$, $\emptyset$}
        \EndProcedure
    \end{algorithmic}
\end{algorithm}

\begin{proposition}
Algorithm~\ref{alg:small_l} runs in $O(T_{BO}(n,k) + T_{RC}(n,\ell))$ time.
\end{proposition}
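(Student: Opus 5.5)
The running-time bound follows by adding up the two subroutine calls Algorithm~\ref{alg:small_l} makes; I would also note correctness, since the bound refers to this particular algorithm.

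\textbf{Running time.} First, the call \textsc{BoundedOri}$_k(G)$ costs $T_{BO}(n,k)$ time. We assume $m\le kn$, so the input satisfies the size constraints in Table~\ref{tab:parameters}. If the call returns a nonempty $X$, we output it in $O(n)$ time, which is absorbed into $T_{BO}(n,k)$. Otherwise we call Algorithm~\ref{alg:check_superset_sparsity} with $U_0=\varnothing$ and $t=0$. Its precondition $0\cdot k\le \ell\le k$ is exactly the hypothesis $\ell\le k$. It also requires $D$ to be $U_0$-source, which holds vacuously when $U_0=\varnothing$. Proposition~\ref{proposition:main} bounds this call by $O(T_{RC}(n,\ell-0\cdot k))=O(T_{RC}(n,\ell))$. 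The auxiliary digraph $D_0'$ has $n+1$ vertices and at most $kn$ arcs, so the constants $C_i$ cover it. Summing gives $O(T_{BO}(n,k)+T_{RC}(n,\ell))$. Note that $T_{RC}(n,0)=\Theta(n)$ by convention, so the linear construction cost is always absorbed, even when $\ell=0$.

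\textbf{Correctness.} Two cases arise.
\begin{itemize}
\item If \textsc{BoundedOri}$_k$ fails, it returns a set $X$ with $i_G(X)>k|X|\ge k|X|-\ell$. This $X$ is nonempty and violates $(k,\ell)$-sparsity, since $k|X|-\ell\ge 0$ whenever $X\neq\varnothing$ and $\ell\le k$.
\item If it succeeds, Lemma~\ref{lemma:main} with $t=0$ applies. The strict supersets of $\varnothing$ are exactly the nonempty sets, and for these the bound $\max\{k|X|-\ell,0\}$ equals $k|X|-\ell$ because $\ell\le k$. So the returned set is a violating set precisely when one exists.
\end{itemize}

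No step is a real obstacle. The only point needing care is checking that the hypotheses $tk\le\ell\le(t+1)k$ of Lemma~\ref{lemma:main} and Algorithm~\ref{alg:check_superset_sparsity} hold with $t=0$.
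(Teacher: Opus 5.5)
Your proof is correct and matches the paper's argument: the bound is the cost of \textsc{BoundedOri}$_k$ plus the $O(T_{RC}(n,\ell))$ cost of \textsc{CheckSupsetSparsity} with $t=0$, which Proposition~\ref{proposition:main} supplies. Your extra checks are all sound but go beyond the paper's short proof: the $t=0$ preconditions, the exact arc count $kn$ of $D_0'$, the convention $T_{RC}(n,0)=\Theta(n)$, and the correctness argument.
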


\begin{proof}
Computing a $k$-indegree-bounded orientation takes $O(T_{BO}(n,k))$ time. Together with Proposition~\ref{proposition:main}, which gives a running time of $O(T_{RC}(n,\ell))$ for \textsc{CheckSupsetSparsity}, this yields the claimed bound.
\end{proof}

\subsection{The range $k < \ell < 2k$}

In the range $k < \ell < 2k$, Lemma~\ref{lemma:main} can be applied with $t=1$ to test efficiently whether there exists a violating vertex set containing a prescribed vertex. Repeating this test for every vertex gives an algorithm for deciding $(k,\ell)$-sparsity, but this naive approach still does not improve on the quadratic time bound. To obtain a faster algorithm, we use an additional structural property: by a theorem of Nash-Williams~\cite{nash1961edgeDisjointST}, every $(k,\ell)$-sparse graph with $k \leq \ell < 2k$ admits a decomposition into $k$ forests. Our algorithm for this parameter range exploits this decomposition. Throughout this section, all graphs are assumed to be loopless. We begin with a basic definition.

\begin{definition}
Given a forest $F=(V, E)$, we say that $F$ \emph{saturates} a nonempty vertex set $X \subseteq V$ if the induced subgraph $F[X]$ is connected, i.e., it contains exactly $|X|-1$ edges.
\end{definition}

\begin{lemma}
\label{lemma:violating_set}

Let $k < \ell < 2k$ and suppose the edge set of $G=(V,E)$ decomposes into $k$ edge-disjoint spanning forests $F_1,\dots,F_k$. Then every vertex set that violates the $(k,\ell)$-sparsity of $G$ is saturated by at least one of $F_1,\dots,F_{\ell-k}$.
\end{lemma}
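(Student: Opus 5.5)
Suppose $X$ violates $(k,\ell)$-sparsity, so $i_G(X) > k|X|-\ell$ (note $X$ must be nonempty, and since $\ell<2k$ and $G$ is loopless we only need the bound as stated). The plan is a pure counting argument. For each forest $F_j$, the induced subgraph $F_j[X]$ is a forest on $|X|$ vertices, so $|F_j[X]| \leq |X|-1$, with equality exactly when $F_j$ saturates $X$. Since $F_1,\dots,F_k$ partition $E$, we have
\[
i_G(X) = \sum_{j=1}^{k} |F_j[X]|.
\]

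Assume for contradiction that none of $F_1,\dots,F_{\ell-k}$ saturates $X$. Then each of these $\ell-k$ forests contributes at most $|X|-2$ edges, and each of the remaining $2k-\ell$ forests contributes at most $|X|-1$. Summing gives
\[
i_G(X) \leq (\ell-k)(|X|-2) + (2k-\ell)(|X|-1) = k|X| - 2(\ell-k) - (2k-\ell) = k|X| - \ell,
\]
contradicting the assumption that $X$ is violating. Hence some $F_j$ with $j\leq \ell-k$ saturates $X$.

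The only point requiring care is that the bound $|F_j[X]|\le |X|-1$ and the saturation criterion need $X\neq\varnothing$. This holds because the empty set trivially satisfies the sparsity bound $i_G(\varnothing)=0\le\max\{-\ell,0\}$. Beyond that there is no real obstacle. The argument does not even use $\ell>k$ except to make the index range $1,\dots,\ell-k$ nonempty. Looplessness is implicit in the assumption that $E$ decomposes into forests.
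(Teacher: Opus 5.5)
Your proof is correct and is essentially the paper's argument. The paper likewise bounds $i_G(X)=\sum_j i_{F_j}(X)$ by $|X|-2$ for each unsaturated forest among $F_1,\dots,F_{\ell-k}$ and by $|X|-1$ for the remaining $2k-\ell$ forests, obtaining $i_G(X)\le k|X|-\ell$. The only difference is that you phrase it as a contradiction rather than a contrapositive, and you make the nonemptiness of $X$ explicit.
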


\begin{proof}
Let $X \subseteq V$ be a nonempty vertex set that is not saturated by any of $F_1,\dots,F_{\ell-k}$. Then we have
\[
i_G(X)=\sum_{i=1}^k i_{F_{i}}(X) \leq \sum_{i=1}^{\ell - k} (|X|-2) + \sum_{i=\ell - k + 1}^{k} (|X|-1) = k|X|-2(\ell - k)-(2k - \ell) = k|X| - \ell.
\]
Thus $X$ does not violate the $(k,\ell)$-sparsity condition.
\end{proof}

By the lemma above, we may restrict attention to violating sets that are saturated by $F_i$ for some $i = 1, \dots, \ell-k$. Since a forest saturates a vertex set $X$ if and only if one of its connected components does, it is enough to consider the connected components of $F_1, \dots, F_{\ell-k}$. This reduces our problem to the following: given a graph $G$ and a spanning tree $T$ of $G$, either find a violating vertex set in $G$ or correctly conclude that no violating vertex set saturated by $T$ exists. Our first goal is to design a procedure for this task. We begin with a simple observation.

\begin{observation}
Let $T = (V,E)$ be a tree, and let $X \subseteq V$ be a vertex set saturated by $T$. Fix any vertex $c \in V$, and let $T_1, \dots, T_q$ denote the connected components of $T \setminus \{c\}$. Then either $c \in X$ or there exists an index $i$ such that $X \subseteq V(T_i)$ and $X$ is saturated by $T_i$.
\end{observation}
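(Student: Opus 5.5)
The plan is a direct case analysis on whether $c$ lies in $X$. If $c \in X$, the first alternative holds and there is nothing to prove. So assume $c \notin X$; we must produce an index $i$ with $X \subseteq V(T_i)$ and show that $T_i$ saturates $X$.

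The key step is to exploit that saturation means $T[X]$ is connected; here $X$ is nonempty by the definition of saturation. Since $c \notin X$, every edge of $T[X]$ avoids $c$, so $T[X]$ is a subgraph of $T \setminus \{c\}$. Being connected, $T[X]$ must lie entirely inside a single connected component of $T \setminus \{c\}$, say $T_i$. This gives $X \subseteq V(T_i)$.

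It remains to check that $T_i$ saturates $X$, i.e.\ that $T_i[X]$ is connected with $|X|-1$ edges. Because $T_i$ is an induced subgraph of $T$ (a component of the vertex-deleted tree) and $X \subseteq V(T_i)$, we have $T_i[X] = T[X]$. The latter is connected with exactly $|X|-1$ edges, since it is a connected subgraph of a tree and hence itself a tree, so the same holds for $T_i[X]$.

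No step poses a real obstacle. The only point needing care is the identity $T_i[X]=T[X]$, which follows from $T_i$ being an induced subgraph of $T$ on its vertex set.
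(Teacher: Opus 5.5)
Your proof is correct and follows essentially the same route as the paper. In both, the case $c\notin X$ is handled by using connectivity of $T[X]$ to place $X$ inside a single component $T_i$, and then observing that the edges of $T$ induced by $X$ all lie in $T_i$, so $T_i$ saturates $X$.
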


\begin{proof}
By definition, a vertex set $X$ saturated by $T$ induces a subtree of $T$. Hence, either $c \in X$ or $X \subseteq V(T_i)$ for some index $i$. In the latter case, every edge of $T$ induced by $X$ belongs to $T_i$, and so $X$ is saturated by $T_i$ as well.
\end{proof}

The observation above naturally leads to a divide-and-conquer approach based on centroid decomposition~\cite{jordan1869assemblages}. Suppose we are given a $k$-indegree-bounded orientation of $G$ together with a spanning tree $T$. By Propositions~\ref{proposition:main} and~\ref{proposition:d0_orientation}, we can efficiently determine whether there exists a violating vertex set containing the centroid $c$ of $T$, and find one if it does. If so, we are done. Otherwise, by the observation above, it remains to recurse on the connected components of $T \setminus \{c\}$. The full implementation is given below.

\begin{algorithm}[H]
  \caption{Saturated Violation}\label{alg:csvs}
  \begin{algorithmic}[1]
    \Input A $k$-indegree-bounded digraph $D=(V, A)$, and a spanning tree $T$ of the underlying graph of $D$.
   	\Output Either $\varnothing$ or a vertex set violating $(k,\ell)$-sparsity in the underlying graph of $D$. The procedure returns a violating set whenever the underlying graph of $D$ contains one that is saturated by $T$.
    \vspace{0.2em}
    \hrule
    \vspace{0.2em}
    \Procedure{SaturatedViolation$_{k,\ell}$}{$D=(V,A), T$}
		\State $c \gets$ the centroid of $T$\label{alg:csvs:find_centroid}
		\State $(X, D_0) \gets$ \Call{Reorient$_{k}$}{$D$, $\{c\}$}
		\If{$X \neq \varnothing$} \Comment{$X \supsetneq \{c\}$ violates $(k, k)$-sparsity}
			\State \Return $X$
		\EndIf
		\State $X \gets$ \Call{CheckSupsetSparsity$_{k,\ell}$}{$D_0$, $\{c\}$}
		\If {$X \neq \varnothing$} \Comment{$X \supsetneq \{c\}$ violates $(k, \ell)$-sparsity}
			\State \Return $X$
		\EndIf
		\State $T_1, \dots, T_q \gets$ connected components of $(T \setminus \{c\})$
		\For{$i \gets 1, \dots, q$}
			\State $X \gets$ \Call{SaturatedViolation$_{k,\ell}$}{$D\left[V(T_i)\right], T_i$}
			\If{$X \neq \varnothing$}
				\State \Return $X$
			\EndIf
		\EndFor
		\State \Return $\varnothing$
    \EndProcedure
  \end{algorithmic}
\end{algorithm}

\paragraph{Implementation details for \textnormal{\textsc{SaturatedViolation}}.} We find the centroid $c$ using two depth-first searches: the first computes, for each vertex $v$, the sizes of the subtrees incident to $v$, and the second locates a vertex whose largest such subtree has at most half of the vertices of $T$. The components $T_1,\dots,T_q$ with vertex sets $V_1, \dots, V_q$ are obtained by $q$ graph traversals. For each $i$, we build the induced digraph $D[V_i]$ by scanning the vertices of $V_i$, collecting their incoming arcs, and keeping only those whose tail also lies in~$V_i$.

\begin{note}
If the underlying graph of $D$ contains a violating vertex set, but no such set is saturated by $T$, then \textsc{SaturatedViolation} may return either $\varnothing$ or a violating set. This does not affect the correctness of the final recognition algorithm.
\end{note}

\tikzset{
  wavy/.style={
    decorate,
    decoration={snake, amplitude=0.22mm, segment length=2mm, post length=2.5mm},
    draw=black!80,
    line cap=round, line join=round,
    -{Stealth[length=2.4mm,width=2mm,fill=black]},
  },
  thick/.style={
    \niceArrow,
    line width=0.8pt,
    -{Stealth[length=2.4mm,width=2mm]},
  },
}

\begin{example}

Let $(k,\ell)=(2,3)$, and consider the $2$-indegree-bounded digraph $D$ shown at the top of Figure~\ref{fig:centroid}, together with a spanning tree $T$ whose edges are drawn as straight segments. First, find a centroid $c$ of $T$. Since no violating vertex set contains $c$, \textsc{SaturatedViolation} recurses on the two subtrees of $T$ incident to $c$ and solves the corresponding subproblems independently.

\begin{figure}[H]
\centering
\begin{subfigure}[t]{0.8\textwidth}
\centering
\begin{tikzpicture}[scale=1.4]
\begin{scope}
	\node[draw, circle, fill=black, inner sep=2pt] (K) at (-2,1) {};
	\node[draw, circle, fill=black, inner sep=2pt] (A) at (-2,0) {};
	\node[draw, circle, fill=black, inner sep=2pt] (B) at (-1,1) {};
	\node[draw, circle, fill=black, inner sep=2pt] (C) at (-1,0) {};
	\node[draw, circle, fill=black, inner sep=2pt] (D) at (0,1) {};
	\node[draw, circle, fill=white, inner sep=2pt, label={[yshift=-6.75mm]$c$}] (E) at (0,0) {};
	\node[draw, circle, fill=black, inner sep=2pt] (F) at (1,1) {};
	\node[draw, circle, fill=black, inner sep=2pt] (G) at (1,0) {};
	\node[draw, circle, fill=black, inner sep=2pt] (H) at (2,1) {};
	\node[draw, circle, fill=black, inner sep=2pt] (I) at (2,0) {};
	\node[draw, circle, fill=black, inner sep=2pt] (J) at (3,0) {};
	\node[draw, circle, fill=black, inner sep=2pt] (L) at (3,1) {};
	
	\begin{pgfonlayer}{background}
		\begin{scope}[opacity=.6, transparency group]
		\path[line cap=round, line join=round, line width=0.5pt, double=blue!20, double distance=3.8mm, draw=black]{(A.center) -- (C.center) -- (D.center) -- (K.center) -- (B.center) --cycle};
		\highlight{3.7mm}{blue!20, fill=blue!20}{(A.center) to (C.center) to (D.center) to (K.center) -- (B.center) -- cycle}
        \end{scope}
	\end{pgfonlayer}
      
	\begin{pgfonlayer}{background}
		\begin{scope}[opacity=.6, transparency group]
		\path[line cap=round, line join=round, line width=0.5pt, double=blue!20, double distance=3.8mm, draw=black]{(L.center) -- (J.center) -- (G.center) -- (F.center) -- (H.center) -- (J.center) -- cycle};
		\highlight{3.7mm}{blue!20, fill=blue!20}{(L.center) to (J.center) to (G.center) to (F.center) to (H.center) to (J.center) -- cycle}
        \end{scope}
	\end{pgfonlayer}

	\draw[wavy] (B) to (A);
	\draw[thick] (C) to (A);
	\draw[wavy] (C) to (B);
	\draw[thick] (D) to (C);
	\draw[thick] (D) to (B);
	\draw[thick] (B) to (K);
	\draw[thick] (E) to (D);
	\draw[thick] (E) to (G);
	\draw[wavy] (D) to (G);
	\draw[wavy] (G) to (H);
	\draw[wavy] (H) to (I);
	\draw[thick] (G) to (F);
	\draw[thick] (F) to (H);
	\draw[thick] (G) to (I);
	\draw[thick] (I) to (J);
	\draw[thick] (J) to (L);
	\draw[wavy] (H) to (J);

\end{scope}
\end{tikzpicture}
\caption{The digraph $D$, the spanning tree $T$, and a centroid $c$ of $T$.}
\end{subfigure}

\vspace{0.4cm}

\begin{subfigure}[t]{0.38\textwidth}
\centering
\begin{tikzpicture}[scale=1.4]
\begin{scope}[shift={(-0.7,-1.9)}]
	\node[draw, circle, fill=black, inner sep=2pt] (K) at (-2,1) {};
	\node[draw, circle, fill=black, inner sep=2pt] (A) at (-2,0) {};
	\node[draw, circle, fill=black, inner sep=2pt] (B) at (-1,1) {};
	\node[draw, circle, fill=black, inner sep=2pt] (C) at (-1,0) {};
	\node[draw, circle, fill=white, inner sep=2pt,label={right:$c_1$}] (D) at (0,1) {};

	\draw[wavy] (B) to (A);
	\draw[thick] (C) to (A);
	\draw[wavy] (C) to (B);
	\draw[thick] (D) to (C);
	\draw[thick] (D) to (B);
	\draw[thick] (B) to (K);

    \begin{pgfonlayer}{background}
		\begin{scope}[opacity=.6, transparency group]
		\path[line cap=round, line join=round, line width=0.5pt, double=blue!20, double distance=3.8mm, draw=black]{(B.center) -- (K.center) -- cycle};
		\highlight{3.7mm}{blue!20, fill=blue!20}{(B.center) to (K.center) -- cycle}
        \end{scope}
	\end{pgfonlayer}
    
    \begin{pgfonlayer}{background}
		\begin{scope}[opacity=.6, transparency group]
		\path[line cap=round, line join=round, line width=0.5pt, double=blue!20, double distance=3.8mm, draw=black]{(A.center) -- (C.center) -- cycle};
		\highlight{3.7mm}{blue!20, fill=blue!20}{(A.center) to (C.center) -- cycle}
        \end{scope}
	\end{pgfonlayer}
\end{scope}
\end{tikzpicture}
\caption{The subgraph induced by the left subtree of $c$ and a centroid $c_1$ of this subtree.}
\end{subfigure}
\hspace{1cm}
\begin{subfigure}[t]{0.38\textwidth}
\centering
\begin{tikzpicture}[scale=1.4]
\begin{scope}[shift={(-0.3,-1.9)}]
	\node[draw, circle, fill=black, inner sep=2pt] (F) at (1,1) {};
	\node[draw, circle, fill=white, inner sep=2pt,label={left:$c_2$}] (G) at (1,0) {};
	\node[draw, circle, fill=black, inner sep=2pt] (H) at (2,1) {};
	\node[draw, circle, fill=black, inner sep=2pt] (I) at (2,0) {};
	\node[draw, circle, fill=black, inner sep=2pt] (J) at (3,0) {};
	\node[draw, circle, fill=black, inner sep=2pt] (L) at (3,1) {};

	\draw[wavy] (G) to (H);
	\draw[wavy] (H) to (I);
	\draw[thick] (G) to (F);
	\draw[thick] (F) to (H);
	\draw[thick] (G) to (I);
	\draw[thick] (I) to (J);
	\draw[thick] (J) to (L);
	\draw[wavy] (H) to (J);
    
    \begin{pgfonlayer}{background}
		\begin{scope}[opacity=.6, transparency group]
		\path[line cap=round, line join=round, line width=0.5pt, double=blue!20, double distance=3.8mm, draw=black]{(F.center) -- (H.center) -- cycle};
		\highlight{3.7mm}{blue!20, fill=blue!20}{(F.center) to (H.center) -- cycle}
        \end{scope}
	\end{pgfonlayer}
    
    \begin{pgfonlayer}{background}
		\begin{scope}[opacity=.6, transparency group]
		\path[line cap=round, line join=round, line width=0.5pt, double=blue!20, double distance=3.8mm, draw=black]{(I.center) -- (J.center) -- (L.center) -- (J.center) -- cycle};
		\highlight{3.7mm}{blue!20, fill=blue!20}{(I.center) -- (J.center) -- (L.center) -- (J.center) -- cycle}
        \end{scope}
	\end{pgfonlayer}
\end{scope}

\end{tikzpicture}
\caption{The subgraph induced by the right subtree of $c$ and a centroid $c_2$ of this subtree.}
\end{subfigure}
\caption{Two steps of the centroid decomposition. The blue regions show the vertex sets of the connected components obtained after deleting the displayed centroid; these components define the recursive subproblems.}\label{fig:centroid}
\end{figure}

\end{example}

For completeness, we record a simple fact that we will use repeatedly in the analysis.

\begin{lemma}
\label{lemma:sum_time}
Let $f(n) = n \cdot g(n)$, where $g:\mathbb{Z}_{>0} \to \mathbb{R}_{>0}$ is nondecreasing, and suppose that $T(n) = O(f(n))$. Let $q$ be a positive integer. Then, for any positive integers $n_1, \dots, n_q$ with sum $N$,
we have
\[
\sum_{i=1}^q T(n_i) = O(f(N)).
\]
\end{lemma}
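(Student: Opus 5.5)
Since $T(n)=O(f(n))$, I first fix constants $C>0$ and $n_0$ such that $T(n)\le C f(n)$ for all $n\ge n_0$. I then absorb the finitely many small arguments into the constant. Let
\[
M=\max\{T(n): 1\le n<n_0\}.
\]
Because $g$ is positive and nondecreasing, $f(n)=n\,g(n)\ge n\,g(1)\ge g(1)>0$ for every $n\ge 1$. Hence $T(n)\le C' f(n)$ for all $n\ge1$, where $C'=\max\{C,\,M/g(1)\}$. This step needs $T$ to be nonnegative, which holds since $T$ is a running time. This is the only place where some care is needed: the $O$-statement only constrains large $n$, so the positivity of $g$ is exactly what lets us cover every $n_i\ge1$ with a single constant.

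With the uniform bound in hand, the estimate is a one-line monotonicity argument:
\[
\sum_{i=1}^q T(n_i)\;\le\; C'\sum_{i=1}^q n_i\,g(n_i)\;\le\; C'\sum_{i=1}^q n_i\,g(N)\;=\;C' N g(N)\;=\;C' f(N).
\]
Here the second inequality uses $n_i\le N$, which holds because all $n_i$ are positive integers summing to $N$, together with the fact that $g$ is nondecreasing. The constant $C'$ depends only on $T$ and $g$, and not on $q$ or on the particular $n_i$. This gives the claimed bound $O(f(N))$ uniformly over all choices of $q$ and $n_1,\dots,n_q$, which is how the lemma is used in the recursion analysis.
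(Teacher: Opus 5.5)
Your proof is correct and follows the paper's argument: obtain a bound $T(n)\le C f(n)$ valid for every $n\ge 1$, then use $n_i\le N$ and the monotonicity of $g$ to get $\sum_i n_i g(n_i)\le N g(N)$. The only difference is that you justify the "for all $n$" constant explicitly, using $f(n)\ge g(1)>0$ to absorb the finitely many small arguments, where the paper simply takes $C$ sufficiently large.
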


\begin{proof}
For a sufficiently large positive constant $C$, the inequality $T(n) \leq C \cdot f(n)$ holds for all $n \in \mathbb{Z}_{>0}$. Then, using the monotonicity of $g$, we have
\[
  \sum_{i=1}^q T(n_i) \leq C \cdot \sum_{i=1}^q f(n_i) = C \cdot \sum_{i=1}^q n_i \cdot g(n_i)\leq C \cdot \sum_{i=1}^q n_i \cdot g(N) = C \cdot N \cdot g(N) = C \cdot f(N).\qedhere
\]
\end{proof}

By the ``conquer'' phase of Algorithm~\ref{alg:csvs}, we mean all steps excluding the recursive calls. We first analyze the running time of this phase and then proceed to the analysis of the entire algorithm.

\begin{proposition}
\label{proposition:CSVS_time_without_recursion}
The conquer phase of Algorithm~\ref{alg:csvs} runs in $O(T_{RC}(n,\ell-k))$ time.
\end{proposition}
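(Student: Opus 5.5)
The plan is to bound each step of the conquer phase separately and then add the bounds. Every step will turn out to cost either $O(n)$ or $O(T_{RC}(n,\ell-k))$. Since $T_{RC}(n,\eta)=\Omega(n)$ for every $\eta$ (by the note on reading the entire input, which the paper also imposes for $\eta=0$), the linear terms are absorbed. Throughout, $n$ is the number of vertices of the current $D$. The digraph $D$ is $k$-indegree-bounded, so it has at most $kn=O(n)$ arcs, and the tree $T$ has $n-1$ edges. Hence every input handled in one call has linear size.

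First, locating the centroid takes two depth-first searches on $T$, which is $O(n)$ by the implementation details given above.

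Second, the call \textsc{Reorient}$_k(D,\{c\})$ has $t=1$, so Proposition~\ref{proposition:d0_orientation} gives $O(n)$.

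Third, \textsc{CheckSupsetSparsity}$_{k,\ell}(D_0,\{c\})$ is the step that produces $T_{RC}$. I need to check the precondition of Algorithm~\ref{alg:check_superset_sparsity}. With $t=1$, the inequalities $k\le \ell\le 2k$ follow from $k<\ell<2k$. The set $\{c\}$ is independent because the graph is loopless. The digraph $D_0$ is $\{c\}$-source by the output specification of \textsc{Reorient}. Proposition~\ref{proposition:main} then gives $O(T_{RC}(n,\ell-k))$. One subtlety: $D_0'$ has $n$ vertices (the root $s$ replaces $c$) and at most $kn$ arcs, so it fits within the size constants $C_4$ of Table~\ref{tab:parameters}.

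Fourth, building the components $T_1,\dots,T_q$ takes one traversal per component, and together these traversals visit each vertex and tree edge $O(1)$ times, so this is $O(n)$. Constructing the induced digraphs $D[V_i]$ is done by scanning the incoming arcs of each vertex of $V_i$ and keeping those whose tail lies in $V_i$. To test membership in $O(1)$, label each vertex with its component index during the traversals. Each arc is then examined once, at its head, so the total over all $i$ is $O(n+|A|)=O(n)$.

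The only step needing care is the induced-subgraph construction: the bound must hold summed over all $q$ components, not per component. The component labels handle this. Summing the four steps gives $O(n+T_{RC}(n,\ell-k))=O(T_{RC}(n,\ell-k))$.
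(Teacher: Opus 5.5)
Your proposal is correct and follows essentially the same route as the paper's proof. Both bound the centroid search, \textsc{Reorient}, \textsc{CheckSupsetSparsity}, the component traversals and the induced-subgraph construction separately, then absorb the $O(n)$ terms into $T_{RC}(n,\ell-k)$; your explicit precondition check for Algorithm~\ref{alg:check_superset_sparsity} and the component labels giving $O(1)$ membership tests are details the paper leaves implicit.
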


\begin{proof}
The two depth-first searches used to find a centroid each take $O(n)$ time. By Propositions~\ref{proposition:main} and~\ref{proposition:d0_orientation}, the calls to \textsc{CheckSupsetSparsity} and \textsc{Reorient} run in $O(T_{RC}(n,\ell-k))$ and $O(n)$ time, respectively. The subtrees $T_1,\dots,T_q$ with vertex sets $V_1,\dots,V_q$ are obtained by $q$ graph traversals, where the $i$th traversal takes $O(|V_i|)$ time. Hence, by Lemma~\ref{lemma:sum_time}, their total running time is
\[
O\left(\sum_{i=1}^q |V_i|\right)=O(n-1)=O(n).
\]
Moreover, since $D$ is $k$-indegree-bounded, constructing $D[V_i]$ requires checking at most $k$ incoming arcs for each vertex in $V_i$. Hence, the construction takes $O(|V_i|)$ time. Applying Lemma~\ref{lemma:sum_time} once more, we obtain a total running time of $O(n)$ for constructing all induced digraphs $D[V_i]$. Combining these bounds, the overall running time of the conquer phase is
\[
O(T_{RC}(n,\ell-k)+n)=O(T_{RC}(n,\ell-k)).
\qedhere
\]
\end{proof}

\begin{proposition}
\label{proposition:CSVS_time}
Algorithm~\ref{alg:csvs} runs in $O(T'_{RC}(n, \ell - k) \log n)$ time.
\end{proposition}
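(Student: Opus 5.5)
We analyze the recursion tree of Algorithm~\ref{alg:csvs} level by level, using the centroid property and the regularized bound $T'_{RC}$ together with Lemma~\ref{lemma:sum_time}. Consider the tree of recursive calls, and assign depth $0$ to the top-level call on $(D,T)$ with $|V|=n$. Each call at depth $d$ operates on a tree $T'$ (and the induced digraph on its vertex set), and its children operate on the components of $T'\setminus\{c\}$ for the centroid $c$ of $T'$.

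The first step is the size bound. By the defining property of a centroid, every component of $T'\setminus\{c\}$ has at most $|V(T')|/2$ vertices. Hence a call at depth $d$ involves at most $n/2^d$ vertices, and the recursion depth is at most $\lfloor\log_2 n\rfloor+1$.

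The second step is disjointness. The vertex sets of the calls at any fixed depth are pairwise disjoint. This holds because the children's vertex sets partition $V(T')\setminus\{c\}$, so by induction on $d$ the sets at depth $d$ are disjoint subsets of $V$. In particular, their sizes $n_1,\dots,n_q$ sum to some $N_d\le n$.

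The third step sums the work level by level. By Proposition~\ref{proposition:CSVS_time_without_recursion}, the conquer phase of a call on $n_i$ vertices costs $O(T_{RC}(n_i,\ell-k))\subseteq O(T'_{RC}(n_i,\ell-k))$. Now $T'_{RC}(n,\eta)=n\cdot g(n)$, where $g(n)=\max_{q\le n}T_{RC}(q,\eta)/q$ is positive and nondecreasing by construction; this is exactly why the regularization was introduced. Lemma~\ref{lemma:sum_time} then bounds the total work at depth $d$ by $O(T'_{RC}(N_d,\ell-k))\subseteq O(T'_{RC}(n,\ell-k))$, using monotonicity of $T'_{RC}$ in $n$. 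The hidden constant is uniform over levels, since it comes from the single constant $C$ in that lemma.

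Summing over the $O(\log n)$ levels gives $O(T'_{RC}(n,\ell-k)\log n)$. Early termination, when a violating set is returned, only reduces the work.

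The main subtlety is making sure the per-call bound is genuinely of the form $n_i\cdot g(n_i)$ with $g$ monotone, so that summing across a level does not lose more than a constant. A further point is that the constant in Proposition~\ref{proposition:CSVS_time_without_recursion} must not depend on the subproblem. The latter holds because each subproblem's induced digraph is still $k$-indegree-bounded with $O(n_i)$ arcs, so the size constraints of Table~\ref{tab:parameters} apply with the same constants.
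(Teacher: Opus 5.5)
Your proposal is correct and follows essentially the same argument as the paper. You bound the conquer work at each recursion level by $O(T'_{RC}(n,\ell-k))$ using vertex-disjointness of the subproblems and Lemma~\ref{lemma:sum_time}, then multiply by the $O(\log n)$ depth from the centroid halving property; your explicit use of the monotonicity of $T'_{RC}$ (the level sizes sum to at most $n$, not exactly $n$) and of uniform hidden constants supplies details the paper leaves implicit.
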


\begin{proof}
Consider all recursive calls at a fixed recursion level. Let $D_1,\dots,D_p$ be the digraphs passed to the procedure at this level, and set $n_i = |V(D_i)|$ for each $i$. By Proposition~\ref{proposition:CSVS_time_without_recursion}, the conquer phase on $D_i$ runs in time
\[
O(T_{RC}(n_i,\ell-k)) \subseteq O(T'_{RC}(n_i,\ell-k)).
\]
Since the digraphs $D_1,\dots,D_p$ are pairwise vertex-disjoint subgraphs of the original digraph $D$, we have
\[
\sum_{i=1}^p n_i \leq |V(D)| = n.
\]
By Lemma~\ref{lemma:sum_time}, the total running time spent on the conquer phases at this recursion level is
$O(T'_{RC}(n,\ell-k))$.
Since the centroid chosen at line~\ref{alg:csvs:find_centroid} splits $T$ into subtrees of size at most $|V(T)|/2$, the recursion depth is $O(\log n)$. Multiplying this by the $O(T'_{RC}(n,\ell-k))$ work per level gives the total running~time
\[
O(T'_{RC}(n,\ell-k)\log n).\qedhere
\]
\end{proof}

We now return to the original problem of deciding whether a given graph $G$ is $(k,\ell)$-sparse.
We first compute a forest decomposition. Then, for each component of the first $\ell - k$ forests, we invoke \textsc{SaturatedViolation} to test whether $G$ contains a violating vertex set saturated by that component. The full implementation is as follows.

\begin{algorithm}[H]
  \caption{Check Sparsity for $k < \ell < 2k$} \label{alg:middle_l}
  \begin{algorithmic}[1]
    \Input An undirected graph $G = (V, E)$.
    \Output A vertex set that violates the $(k, \ell)$-sparsity of $G$ if it exists; otherwise $\varnothing$.
    \vspace{0.2em}
    \hrule
    \vspace{0.2em}
    \Procedure{CheckSparsity$_{k,\ell}$}{$G=(V,E)$}
		\State $(X, F_1, \dots, F_k) \gets$ \Call{ForestDecomp$_{k}$}{$G$}
		\If{$X \neq \varnothing$} \Comment{$X$ violates $(k, k)$-sparsity}
			\State \Return $X$
		\EndIf
		\State $D \gets$ a $k$-indegree-bounded orientation of $G$ \Comment{Use $F_1, \dots, F_k$ to obtain $D$} \label{alg:middle_l:bounded_ori}
		\For{$i \gets 1, \dots, \ell-k$}
			\State $T_{i, 1}, \dots, T_{i, q_i} \gets$ connected components of $F_i$
			\For{$j \gets 1, \dots, q_i$}
				\State $X \gets$ \Call{SaturatedViolation$_{k,\ell}$}{$D\left[V(T_{i,j})\right]$, $T_{i,j}$}
				\If{$X \neq \varnothing$}
					\State \Return $X$
				\EndIf
			\EndFor
		\EndFor
		\State \Return $\varnothing$
	\EndProcedure
  \end{algorithmic}
\end{algorithm}

\paragraph{Implementation details for \textnormal{\textsc{CheckSparsity}}.}
Once a forest decomposition is known, the $k$-indegree-bounded orientation $D$ at line~\ref{alg:middle_l:bounded_ori} can be obtained in linear time: choose a root in each tree and orient every edge from parent to child. With this orientation in hand, we find the components $T_{i,j}$ and build the induced subgraphs $D[V(T_{i,j})]$ exactly as in Algorithm~\ref{alg:csvs}.

\begin{proposition}
Algorithm~\ref{alg:middle_l} runs in $O(T_{FD}(n,k) + T'_{RC}(n, \ell - k) \log n)$ time.
\end{proposition}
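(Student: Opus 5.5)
The plan is to account separately for the three phases of Algorithm~\ref{alg:middle_l}: the forest decomposition, the linear-time preprocessing (orientation, components, induced digraphs), and the calls to \textsc{SaturatedViolation}. The first phase is immediate. The call to \textsc{ForestDecomp}$_k$ on $G$ takes $O(T_{FD}(n,k))$ time, since we may assume $m\le kn$, so the size constraints of Table~\ref{tab:parameters} are met. Moreover, $T_{FD}(n,k)=\Omega(n)$ because the subroutine must read its input. Hence all linear-time work can be absorbed into this term.

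For the preprocessing, we follow the implementation details. The orientation $D$ at line~\ref{alg:middle_l:bounded_ori} is obtained by rooting each tree of each $F_i$ and orienting edges away from the root. Each vertex then receives at most one incoming arc per forest, so $D$ is $k$-indegree-bounded, and the construction takes $O(n)$ time. For each fixed $i\le \ell-k$, the components $T_{i,1},\dots,T_{i,q_i}$ are found by graph traversals in total time $O(n)$. Each induced digraph $D[V(T_{i,j})]$ is built by scanning the at most $k$ incoming arcs of each of its vertices, which costs $O(|V(T_{i,j})|)$. Summing over $j$ and then over the constantly many indices $i$ gives $O((\ell-k)n)=O(n)$.

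The main step is bounding the calls to \textsc{SaturatedViolation}. Fix $i$ and let $n_{i,j}=|V(T_{i,j})|$. Since the components of $F_i$ partition $V$, we have $\sum_j n_{i,j}=n$. By Proposition~\ref{proposition:CSVS_time}, the call on $T_{i,j}$ costs $O(T'_{RC}(n_{i,j},\ell-k)\log n_{i,j})$. I would apply Lemma~\ref{lemma:sum_time} with
\[
f(q)=q\cdot g(q),\qquad g(q)=\max_{1\le p\le q}\frac{T_{RC}(p,\ell-k)}{p}\cdot \log(q+1),
\]
using $\log(q+1)$ instead of $\log q$ to keep $g$ positive at $q=1$. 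This $g$ is nondecreasing, since it is the product of two nondecreasing positive factors; the first factor is positive because $T_{RC}(p,\ell-k)=\Theta(p)$ is assumed. This monotonicity is exactly why the regularization $T'_{RC}$ was introduced. The only subtlety is that Lemma~\ref{lemma:sum_time} requires a single function $T$ dominated by $f$. I would take $T(q)$ to be the worst-case running time of \textsc{SaturatedViolation} on inputs with $q$ vertices, which is $O(f(q))$ by Proposition~\ref{proposition:CSVS_time}. The lemma then gives $\sum_j T(n_{i,j})=O(f(n))=O(T'_{RC}(n,\ell-k)\log n)$ for each $i$, and summing over the $\ell-k$ values of $i$ loses only a constant factor.

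Adding the three phases yields $O(T_{FD}(n,k)+n+T'_{RC}(n,\ell-k)\log n)$, and the linear term is absorbed into the first summand. The main point of care, rather than a genuine obstacle, is verifying the monotonicity hypothesis of Lemma~\ref{lemma:sum_time} for the combined factor including $\log n$, and handling the $\log 1=0$ edge case as described above.
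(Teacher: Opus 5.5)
Your proposal is correct and follows the same argument as the paper. The steps match: forest decomposition in $O(T_{FD}(n,k))$ time, then linear-time orientation, component extraction and induced-digraph construction, and finally Proposition~\ref{proposition:CSVS_time} combined with Lemma~\ref{lemma:sum_time} over each partition $\sum_j n_{i,j}=n$, summed over the constantly many indices $i$. Your explicit choice of $g$ with the factor $\log(q+1)$, and your justification for absorbing the linear term, just spell out hypotheses of Lemma~\ref{lemma:sum_time} that the paper uses without comment.
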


\begin{proof}
Computing the forest decomposition takes $O(T_{FD}(n,k))$ time, after which the orientation $D$ is obtained in linear time. Each component $T_{i,j}$, with $n_{i,j}$ vertices, can be identified in $O(n_{i,j})$ time by a single graph traversal. Applying Lemma~\ref{lemma:sum_time} and noting that for a fixed $i$, we have $\sum_{j=1}^{q_i} n_{i, j} = n$, the total time spent over all components of all forests is
\[
O((\ell-k)n) = O(n).
\]
For each pair $(i,j)$, constructing the induced subgraph $D[V(T_{i,j})]$ takes $O(n_{i,j})$ time, since we check at most $k$ incoming arcs for each vertex. By Proposition~\ref{proposition:CSVS_time}, the corresponding call to \textsc{SaturatedViolation} runs in $O(T'_{RC}(n_{i,j},\ell-k)\log n_{i,j})$ time. Using again that for each fixed $i$ we have $\sum_{j=1}^{q_i} n_{i,j} = n$, and applying Lemma~\ref{lemma:sum_time}, the total running time of all such calls is
\[
O((\ell-k)\, T'_{RC}(n,\ell-k)\log n)
= O(T'_{RC}(n,\ell-k)\log n).
\]
Putting everything together, the overall running time is
\[
O(T_{FD}(n,k)+n+T'_{RC}(n,\ell-k)\log n)
= O(T_{FD}(n,k)+T'_{RC}(n,\ell-k)\log n).
\qedhere
\]
\end{proof}

\subsection{The range $2k \leq \ell < 3k$}

In the range $2k \leq \ell < 3k$, we follow the standard convention that the sparsity condition is required only for vertex sets of size at least three, and we restrict attention throughout to simple graphs. Our algorithm follows the usual pebble game paradigm. Given a graph $G = (V, E)$, we construct a $(k, \ell)$-sparse subgraph $H$ incrementally, starting from the empty graph on vertex set $V$ and considering the edges of $G$ one by one in an arbitrary but fixed order. For each edge $uv \in E$, we test whether adding $uv$ to the current subgraph $H$ preserves $(k, \ell)$-sparsity. If so, we insert $uv$ into $H$; otherwise, $G$ is not $(k, \ell)$-sparse, and the algorithm terminates. The next observation gives an exact characterization of when such an insertion is possible.

\begin{observation}
\label{observation:big_l}
Let $2k \leq \ell < 3k$, let $H = (V,E)$ be a $(k,\ell)$-sparse graph, and let $u,v \in V$ be distinct vertices with $uv \notin E$. Then the graph $H' = (V, E \cup \{uv\})$ is $(k,\ell)$-sparse if and only if $H$ contains no vertex set $X \supsetneq \{u,v\}$ that violates $(k,\ell+1)$-sparsity in $H$.
\end{observation}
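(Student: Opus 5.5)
The plan is to compare, set by set, the edge counts of $H$ and $H'$. For every $X\subseteq V$ we have $i_{H'}(X)=i_H(X)+1$ if $\{u,v\}\subseteq X$, and $i_{H'}(X)=i_H(X)$ otherwise. In the extended range the sparsity condition is only imposed on sets with $|X|\geq 3$. Since $\ell<3k$, we have $k|X|-\ell>0$ for all such $X$, so there the bound reads simply $i(X)\leq k|X|-\ell$ and the maximum with $0$ plays no role. Sets not containing both $u$ and $v$ keep the same edge count, and $H$ is already $(k,\ell)$-sparse, so they satisfy the bound in $H'$ automatically. Hence $H'$ is $(k,\ell)$-sparse if and only if every $X\supseteq\{u,v\}$ with $|X|\geq 3$ satisfies $i_H(X)+1\leq k|X|-\ell$.

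The next step is to rewrite this condition. The inequality $i_H(X)+1\leq k|X|-\ell$ is equivalent to $i_H(X)\leq k|X|-(\ell+1)$. This is exactly the statement that $X$ does not violate $(k,\ell+1)$-sparsity in $H$. Here again the maximum with $0$ is harmless, because $\ell+1\leq 3k$ and $|X|\geq 3$ give $k|X|-(\ell+1)\geq 0$. Finally, a set $X\supseteq\{u,v\}$ has $|X|\geq 3$ exactly when $X\supsetneq\{u,v\}$, since $u\neq v$. This turns the condition of the first paragraph into the condition in the statement.

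Both directions follow from these observations. For ``if'', take any $X$ with $|X|\geq 3$: either $X$ does not contain both $u$ and $v$, and its count is unchanged, or $X\supsetneq\{u,v\}$, and the $(k,\ell+1)$ bound in $H$ gives the $(k,\ell)$ bound in $H'$. For ``only if'', a violating $X\supsetneq\{u,v\}$ in $H$ satisfies $i_{H'}(X)=i_H(X)+1>k|X|-\ell$, so $X$ violates $(k,\ell)$-sparsity of $H'$.

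The only delicate point is the small-set convention. The two-element set $\{u,v\}$ is exempt from the condition, which is exactly why the statement requires a strict superset. I also need to check that $H'$ is still simple; this holds because $uv\notin E$, so no other small-set or multi-edge issue arises.
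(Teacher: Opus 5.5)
Your proof is correct and follows the same approach as the paper: sets not containing both $u$ and $v$ keep their edge count, so only sets $X\supsetneq\{u,v\}$ matter, and for these $i_{H'}(X)=i_H(X)+1$, which turns a $(k,\ell)$-violation in $H'$ into a $(k,\ell+1)$-violation in $H$. You also check explicitly that the $\max\{\cdot,0\}$ term is harmless for $|X|\geq 3$ because $\ell+1\leq 3k$. The paper passes over this point silently, so your check is a small but welcome addition.
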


\begin{proof}
Since $H$ is $(k,\ell)$-sparse, every vertex set $X$ that does not contain both $u$ and $v$ continues to satisfy the sparsity bound after the edge $uv$ is added. Therefore, if $H'$ is not $(k,\ell)$-sparse, then there must exist a violating vertex set $X$ that contains both $u$ and $v$. As the sparsity condition is imposed only on vertex sets of size at least three, such a set must satisfy $X \supsetneq \{u,v\}$. Now let $X \supsetneq \{u,v\}$ be any vertex set. Since both $u$ and $v$ lie in $X$, we have
\[
i_{H'}(X)=i_H(X)+1.
\]
It follows that $X$ violates $(k,\ell)$-sparsity in $H'$ if and only if it violates $(k,\ell+1)$-sparsity in $H$.
\end{proof}

By the observation above and the case $t=2$ of Lemma~\ref{lemma:main}, deciding whether $H'$ is $(k,\ell)$-sparse reduces to a rooted arc-connectivity problem, provided that a $k$-indegree-bounded orientation of $H$ is already available. As shown in the implementation below, such an orientation can be maintained efficiently throughout the algorithm.

\begin{algorithm}[H]
  \caption{Check Sparsity for $2k \leq \ell < 3k$} \label{alg:big_l}
  \begin{algorithmic}[1]
    \Input A simple undirected graph $G = (V, E)$.
    \Output A vertex set that violates the $(k, \ell)$-sparsity of $G$ if it exists; otherwise $\varnothing$.
    \vspace{0.2em}
    \hrule
    \vspace{0.2em}
    \Procedure{CheckSparsity$_{k,\ell}$}{$G=(V,E)$}
		\State $H \gets (V, \emptyset)$ \Comment{Initialize our sparse subgraph}
		\State $D \gets (V, \emptyset)$ \Comment{Initialize our $k$-indegree-bounded orientation of $H$}
		\For{$uv \in E$}
    		\State $(X, D) \gets$ \Call{Reorient$_{k}$}{$D$, $\{u, v\}$} \label{alg:big_l:d0} \Comment{It is guaranteed that $X = \varnothing$ and $D \neq \varnothing$}
    		\State $X \gets$ \Call{CheckSupsetSparsity$_{k,\ell+1}$}{$D$, $\{u, v\}$} \label{alg:big_l:checksupset}
			\If{$X \neq \varnothing$} \Comment{$X$ violates $(k,\ell)$-sparsity in $H \cup \{uv\}$} \label{alg:big_l:check}
				\State \Return $X$
			\EndIf
			\State $H \gets H \cup \{uv\}$ \Comment{Insert the edge $uv$ into $H$}
			\State $D \gets D \cup \{uv\}$ \Comment{Insert an arc from $u$ to $v$ in $D$} \label{alg:big_l:d_augment}
		\EndFor
		\State \Return $\varnothing$
	\EndProcedure
  \end{algorithmic}
\end{algorithm}

\begin{note}
By Lemma~\ref{lemma:u0_source}, the subroutine \textsc{Reorient} always finds a suitable orientation, since the set $\{u, v\}$ is independent in $H$, and at the beginning of each iteration of the \texttt{for} loop the graph $H$ is $(k,2k)$-sparse. After the arc $uv$ is inserted at line~\ref{alg:big_l:d_augment}, the orientation $D$ remains $k$-indegree-bounded. Indeed, the only indegree that increases is that of $v$, and before the insertion we have $\varrho_D(v)=0<k$.
\end{note}

\begin{proposition}
Algorithm~\ref{alg:big_l} runs in $O(n \cdot T_{RC}(n,\ell + 1 - 2k))$ time.
\end{proposition}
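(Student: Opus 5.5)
The bound will come from counting the iterations of the \texttt{for} loop and bounding the work in each one. Since $\ell \geq 2k$, any $(k,\ell)$-sparse graph has at most $k|V|-\ell < kn$ edges; more precisely, every edge inserted into $H$ keeps $H$ $(k,\ell)$-sparse. So the loop inserts at most $\max\{kn-\ell,0\}$ edges before either finishing or returning a violating set in the next iteration. This gives $O(n)$ iterations, since $k$ is a constant. If we want to be careful about $m$, we can also rely on the standing assumption $m \leq kn$ from the introduction, which gives the same $O(n)$ count directly.

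For a single iteration, we bound the three steps in order.

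\begin{itemize}
\item The call to \textsc{Reorient}$_k$ with $U_0=\{u,v\}$ has $t=2$, so Proposition~\ref{proposition:d0_orientation} gives $O(2n)=O(n)$ time. The accompanying note guarantees that it succeeds.
\item The call to \textsc{CheckSupsetSparsity}$_{k,\ell+1}$ with $t=2$ is legitimate because $2k \leq \ell+1 \leq 3k$; the upper bound holds since $\ell<3k$. By Proposition~\ref{proposition:main}, it runs in $O(T_{RC}(n,\ell+1-2k))$ time. Here the relevant digraph has $O(n)$ vertices and arcs, because $D$ is $k$-indegree-bounded.
\item Inserting $uv$ into $H$ and the arc $uv$ into $D$ takes $O(1)$ time.
\end{itemize}

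Using the convention from the notes that $T_{RC}(n,\eta)=\Omega(n)$, including the case $\eta=0$, the $O(n)$ cost of \textsc{Reorient} is absorbed. Each iteration therefore costs $O(T_{RC}(n,\ell+1-2k))$.

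Multiplying by the $O(n)$ iterations, and adding $O(n)$ for initialization, gives the claimed total of $O(n\cdot T_{RC}(n,\ell+1-2k))$.

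The only delicate point is the iteration count. It needs either the $m\le kn$ convention or the observation that every successfully inserted edge keeps $H$ sparse, so that $|E(H)|\le kn$ at all times. Beyond that, we only need to check that each subroutine call is on an instance of size $O(n)$, which meets the $C_4 n$ size bound.
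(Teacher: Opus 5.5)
Your proposal is correct and follows the paper's proof: \textsc{Reorient} costs $O(n)$ and \textsc{CheckSupsetSparsity}$_{k,\ell+1}$ costs $O(T_{RC}(n,\ell+1-2k))$ per iteration, the loop runs $m=O(n)$ times by the standing assumption $m\le kn$, and the $O(n)$ term is absorbed into $T_{RC}$. Your check that $2k\le \ell+1\le 3k$ and your alternative sparsity-based iteration count are sound additions; the only slip is that in this range the bound $|E(H)|\le k|V|-\ell$ holds only when $|V|\ge 3$, but for $n\le 2$ a simple graph has at most one edge, so nothing changes.
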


\begin{proof}
By Propositions~\ref{proposition:main} and~\ref{proposition:d0_orientation}, the calls to the subroutines \textsc{CheckSupsetSparsity} and \textsc{Reorient} run in $O(T_{RC}(n,\ell + 1 - 2k))$ and $O(n)$ time, respectively. Since the \texttt{for} loop has $m = O(n)$ iterations, the overall running time~is
\[
O(n \cdot (T_{RC}(n,\ell + 1 - 2k) + n))
= O(n \cdot T_{RC}(n,\ell + 1 - 2k)).
\qedhere
\]
\end{proof}

By Table~\ref{tab:parameters}, this bound is $O(n^2)$ when $\ell\leq 2k+2$ and $O(n^2\log n)$ otherwise.

\begin{note}
It is also possible to decide the sparsity of $H'$ without invoking Lemma~\ref{lemma:main} by extending the ideas developed in~\cite{madarasi2023klSparse}. More precisely, to handle an edge $uv$, one may check, for each vertex $w \in V \setminus \{u,v\}$, whether there exists a $k$-indegree-bounded orientation of $H$ in which the sum of the indegrees of $u$, $v$, and $w$ is at most $3k-\ell-1$. As in the pebble game algorithm, each such test can be implemented using augmenting paths, and the entire insertion step can be carried out in quadratic time.
This yields a conceptually simpler sparsity test than Algorithm~\ref{alg:big_l}. However, its overall running time is $O(n^3)$ and is therefore asymptotically~worse.
\end{note}

\begin{note}
With a minor modification, Algorithm~\ref{alg:big_l} can be extended to compute an inclusion-wise maximal $(k,\ell)$-sparse subgraph of $G$. For a graph with $n$ vertices and $m$ edges, the extended algorithm runs in~time
\[
  O(m \cdot T_{RC}(n,\ell + 1 - 2k)).
\]
\end{note}

\section*{Acknowledgment}
The work was supported by the Lend\"ulet Programme of the Hungarian Academy of Sciences -- grant number LP2021-1/2021, and by the Ministry of Innovation and Technology of Hungary from the National Research, Development and Innovation Fund, financed under the ELTE TKP 2021-NKTA-62 funding scheme.

\bibliographystyle{plain}
\bibliography{bibliography}

\end{document}